\documentclass[11pt]{article}

\input{style}

\declareperson{June}
\declareperson{Nima}

% -- Packages

\usepackage{algorithm2e}

% -- Macros

\newcommand*{\rows}[1]{#1_{\mathrm{row}}}
\newcommand*{\cols}[1]{#1_{\mathrm{col}}}
\DeclareMathOperator{\detlb}{detlb}
\DeclareMathOperator{\maxdet}{maxdet}
\newcommand*{\I}{\mathcal{I}}
\newcommand*{\cN}{\mathcal{N}}
\newcommand*{\cE}{\mathcal{E}}
\newcommand*{\mU}{\overline{U}}

% -- Preamble

\title{An Extension of Pl\"ucker Relations with Applications to Subdeterminant Maximization}
\author{Nima Anari}
\author{Thuy-Duong Vuong}
\affil{Stanford University, \textsf{\{anari,tdvuong\}@stanford.edu}}
\date{}

\addbibresource{references.bib}

\begin{document}
    \maketitle
    
    \begin{abstract}
    	Given a matrix $A$ and $k\geq 0$, we study the problem of finding the $k\times k$ submatrix of $A$ with the maximum determinant in absolute value. This problem is motivated by the question of computing the determinant-based lower bound of \textcite{LSV86} on hereditary discrepancy, which was later shown to be an approximate upper bound as well \cite{Mat13}. The special case where $k$ coincides with one of the dimensions of $A$ has been extensively studied. \Textcite{Nik15} gave a $2^{O(k)}$-approximation algorithm for this special case, matching known lower bounds; he also raised as an open problem the question of designing approximation algorithms for the general case.
    	
    	We make progress towards answering this question by giving the first efficient approximation algorithm for general $k\times k$ subdeterminant maximization with an approximation ratio that depends only on $k$. Our algorithm finds a $k^{O(k)}$-approximate solution by performing a simple local search. Our main technical contribution, enabling the analysis of the approximation ratio, is an extension of Pl\"ucker relations for the Grassmannian, which may be of independent interest; Pl\"ucker relations are quadratic polynomial equations involving the set of $k\times k$ subdeterminants of a $k\times n$ matrix. We find an extension of these relations to $k\times k$ subdeterminants of general $m\times n$ matrices.
    \end{abstract}

    \section{Introduction}
    \label{sec:intro}
    
    We consider the problem of finding the $k\times k$ submatrix of a given $m\times n$ matrix $A$ that has the largest determinant in absolute value:
    \[ \maxdet_k(A):=\max\set*{\abs{\det(A_{I, J})} \given I\in \binom{[m]}{k}, J\in \binom{[n]}{k}}. \]
    A well-studied special case of this problem asks to find the maximum absolute determinant of a \emph{maximal} submatrix. In other words, $k$ is set to $\min\set{m, n}$. This special case is known in the literature as the largest volume simplex problem or simply (sub)determinant maximization \cite{Kha95, DEFM14, Nik15}, and it was originally framed as the problem of finding a largest simplex in a convex body, a simplex-based analog of the John ellipsoid. The best approximation algorithm for when $k=\min\set{m, n}$ was obtained by \textcite{Nik15} who gave an efficient $2^{O(k)}$-approximation algorithm, improving upon $\log(k)^{O(k)}$-approximation of \cite{DEFM14}, and the earlier $k^{O(k)}$-approximation of \cite{Kha95}, and also matching known lower bounds \cite{DEFM14}.m,
    
    More recently, a line of work has studied various generalizations of the largest volume simplex problem, where the returned indices of the submatrix are required to satisfy a matroid constraint \cite{NS16,AO17,SV17,ESV17,AOV18,MNST20}. This line of work led to fruitful applications in several problems in combinatorial optimization: experimental design, network design, fair allocation, column subset selection, and more \cite[see][for the history and applications]{MNST20}.
    
    Despite the extensive study of variants of the special case $k=\min\set{m, n}$, little has been done for the general case where $k<\min\set{m, n}$. A key motivation behind studying the general case comes from discrepancy theory, namely the problem of computing the determinant lower bound on hereditary discrepancy, due to \textcite{LSV86}. This quantity is defined formally as
    \[ \detlb(A):=\max\set*{\sqrt[k]{\maxdet_k(A)}\given k\geq 0}. \]
    \Textcite{Mat13} showed, by completing earlier results of \textcite{LSV86}, that $\detlb(A)$ is a polylogarithmic approximation to the hereditary discrepancy of $A$. This raised the question of efficiently approximating $\detlb(A)$. \Textcite{NT14} showed how to approximately compute the hereditary discrepancy by bypassing $\detlb(A)$ and instead computing a quantity known as $\gamma_2(A)$; they showed that $\gamma_2(A)$ is a logarithmic approximation of $\detlb(A)$ \cite{MNT20} and a polylogarithmic approximation of hereditary discrepancy. But efficient $O(1)$-approximation of $\detlb$ remains open. \Textcite{Nik15} who obtained the best approximation algorithm for the largest volume simplex problem, posed this as an open problem. Such a result has the potential to improve the approximation factor for hereditary discrepancy, as the worst known gap between $\detlb$ and hereditary discrepancy is only logarithmic \cite[see][]{PD10, Mat13}.
    
    As a step towards answering this question, we show how to approximate $\maxdet_k(A)$ efficiently, with an approximation factor that depends only on $k$.
    \begin{theorem}\label{thm:main}
    	There is a polynomial time algorithm that on input $A\in \R^{m\times n}$, outputs sets of indices $I\in \binom{[m]}{k}$ and $J\in \binom{[n]}{k}$ guaranteeing
    	\[ k^{O(k)}\cdot \abs{\det(A_{I,J})}\geq \maxdet_k(A). \]
    \end{theorem}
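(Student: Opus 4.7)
The plan is to run a simple local search over pairs $(I,J)\in\binom{[m]}{k}\times\binom{[n]}{k}$, where the neighborhood of $(I,J)$ consists of all pairs obtained by either a single row-swap (replacing some $i\in I$ with an element of $[m]\setminus I$) or a single column-swap. Starting from any $(I,J)$ with $\det(A_{I,J})\neq 0$, the algorithm repeatedly moves to a neighbor whose absolute determinant is larger by a factor of at least $1+1/\poly(k,n,m)$, terminating at an approximate local optimum in polynomial time, since $\log\abs{\det(A_{I,J})}$ increases by at least $\Omega(1/\poly(k,n,m))$ per step and its total range is polynomially bounded after a standard preprocessing. The crux of the analysis is showing that at such a local optimum, $\abs{\det(A_{I,J})}\cdot k^{O(k)}\geq \maxdet_k(A)$.

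To bound the gap at a local optimum, I would establish and exploit an extension of the classical Pl\"ucker relations to $k\times k$ minors of a general $m\times n$ matrix. In the Grassmannian setting (maximal minors of a $k\times n$ matrix), Pl\"ucker relations express a product of two minors as a signed sum of products of minors differing by a single index swap. For general matrices I seek an identity of the form
\[
\det(A_{I,J})\cdot\det(A_{I^*,J^*}) \;=\; \sum_{\sigma}\pm\,\det(A_{\tilde I_\sigma,\tilde J_\sigma})\cdot\det(A_{\tilde I^*_\sigma,\tilde J^*_\sigma}),
\]
where each summand arises from exchanging one row-index and/or one column-index between $(I,J)$ and $(I^*,J^*)$, with $\poly(k)$ summands in total. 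I expect the identity to follow from a Cauchy--Binet type expansion combined with the classical Pl\"ucker relations applied simultaneously to the row-side and column-side, possibly by working in an appropriate tensor product of exterior powers of $\R^m$ and $\R^n$.

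Given such a relation, the approximation analysis proceeds by an exchange argument. Let $(I,J)$ be the local optimum and $(I^*,J^*)$ the true optimizer. Local optimality implies $\abs{\det(A_{\tilde I,\tilde J})}\lesssim\abs{\det(A_{I,J})}$ whenever $(\tilde I,\tilde J)$ is a single-swap neighbor of $(I,J)$. Substituting into the extended identity and applying the triangle inequality yields
\[
\abs{\det(A_{I,J})}\cdot\abs{\det(A_{I^*,J^*})} \;\leq\; k^{O(1)}\cdot\abs{\det(A_{I,J})}\cdot\max_{\sigma}\abs{\det(A_{\tilde I^*_\sigma,\tilde J^*_\sigma})},
\]
so one index of $(I^*,J^*)$ is swapped for one of $(I,J)$ at the cost of a factor $k^{O(1)}$. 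Iterating this exchange $O(k)$ times, until $(I^*,J^*)$ is fully transformed into $(I,J)$, accumulates a total factor of $k^{O(k)}$, as claimed.

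The \emph{principal obstacle} is formulating and proving the extended Pl\"ucker relation with only $\poly(k)$ terms: classical Pl\"ucker identities rely on the Grassmannian structure of maximal minors, and no obvious identity controls products of non-maximal $k\times k$ minors of a general matrix. A secondary difficulty is arranging the identity so that its ``single swap'' structure aligns with the neighborhood of the local search; otherwise local optimality will not directly bound each summand, and the exchange argument collapses.
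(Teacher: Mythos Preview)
Your proposal has a genuine and fatal gap: the single-swap neighborhood is provably insufficient. Consider the diagonal matrix
\[
A=\operatorname{diag}(d_1,\dots,d_n).
\]
Every principal $k\times k$ submatrix is an \emph{exact} local optimum under single row-swaps or single column-swaps, since swapping a single row (or column) produces a submatrix with a zero row (or column) and hence determinant $0$. But the $d_i$'s can be chosen so that some principal submatrices are arbitrarily worse than the optimum. Thus no exchange identity with the ``single swap'' structure you seek can exist, and your secondary difficulty is not a technicality but the crux of the problem.

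The paper's fix is exactly what this example suggests: enlarge the neighborhood to allow \emph{two} swaps per step (a row and a column simultaneously, or two rows, or two columns). The extended Pl\"ucker relation they prove accordingly expresses $[S][T]$ as a bounded linear combination of products $[S\Delta U][T\Delta U]$ where $U$ ranges over $1$-exchanges \emph{and} $2$-exchanges; the $2$-exchange terms (e.g.\ swapping one row index and one column index at once) are unavoidable. With the $2$-neighborhood, local optimality bounds every term on the right-hand side, and your iterated-exchange argument then goes through with a loss of $k^{O(1)}$ per step over $O(k)$ steps. Apart from this, your overall architecture---crude starting point, multiplicative-improvement local search, Pl\"ucker-type identity plus triangle inequality giving an exchange inequality, then iterating to close the distance to the optimum---matches the paper's.
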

    To the best of our knowledge, this is the first nontrivial approximation algorithm for $\maxdet_k$. Our algorithm is based on a simple local search procedure, where in each iteration indices of \emph{up to two} rows and/or columns are replaced by new ones, until an approximate local maximum is found.
    
    Local search and greedy algorithms have been studied for the related problems of largest volume simplex, D-optimal design, and maximum a posteriori inference in (constrained) determinantal point processes \cite{Fed13, KD16, MSTX19, IMOR20}. A key difference in our work, compared to prior works, is that we need to allow \emph{two changes} per iteration. It is easy to construct examples where replacing only one row or one column at a time can get us stuck in an arbitrarily bad local optimum. For example, consider a diagonal matrix:
    \[
    	A:=\begin{bmatrix}
    		d_1 & 0 & \dots & 0\\
    		0 & d_2 & \dots & 0\\
    		\vdots & \vdots & \ddots & \vdots \\
    		0 & 0 & \dots & d_n
    	\end{bmatrix}
    \]
    Any principal $k\times k$ submatrix is a local optimum. Changing any row or column results in a $0$ determinant. But obviously, $d_i$s can be planted in a way that some of the local optima become arbitrarily bad. On the other hand, allowing simultaneous change of a row and a column lets us move between various subsets of $d_i$s, and escape the bad local optima.
    
    \subsection{Techniques}
    
    Despite the simplicity of applying local search to combinatorial optimization problems, it is often difficult to prove approximation guarantees for its performance. We take a page from the study of matroids and discrete convexity \cite{Mur03}, and prove a quantitative exchange inequality for subdeterminants. We will formally show that if $(I, J)$ and $(I^*, J^*)$ are two sets of indices determining $k\times k$ submatrices, one can swap at most two elements in total between $I$ and $I^*$, and  $J$ and $J^*$, and obtain
    \begin{equation}\label{eq:intro-exchange} \abs{\det(A_{I,J})}\cdot \abs{\det(A_{I^*, J^*})}\leq k^{O(1)} \cdot \abs{\det(A_{I\Delta dI, J\Delta dJ})}\cdot \abs{\det(A_{I^*\Delta dI, J\Delta dJ})} \end{equation}
    for $dI\subseteq I\Delta I^*, dJ\subseteq J\Delta J^*$ of total size $\card{dI}+\card{dJ}\in \set{2, 4}$. This can be viewed as a form of discrete log-concavity for the determinant function on submatrices, and allows us to bound the approximation ratio of a local maximum.
    
    Exchange properties have a long history in the theory of matroids, valuated matroids, and M-concavity \cite{Mur03}. Besides their use in proving the performance of greedy and local search algorithms for optimization problems, they have also recently found applications in sampling problems \cite{ALOV20}.
    
    In order to prove the exchange inequality, we find an extension of Pl\"ucker relations to $k\times k$ subdeterminants of $m\times n$ matrices. The relations are in the form of an identity expressing the l.h.s.\ of \cref{eq:intro-exchange} as a linear combination of the possible values, for different choices of $dI, dJ$ on the r.h.s. Classical Pl\"ucker relations establish exactly this form of identity in the case of $k=\min\set{m, n}$, and have been known to be connected to variants of matroids and exchange properties \cite{DW91}, although not quantitative exchanges of the approximate multiplicative type. Our key technical contribution is the establishment of a variant of these identities when $k<\min\set{m, n}$.
    
    Several variants of Pl\"ucker relations have been studied in the literature. For example \textcite{DW91} extended the Pl\"ucker relations to Pfaffians of skew-symmetric matrices. Their extension involves submatrices of varying sizes, and does not immediately yield a relationship involving just $k\times k$ submatrices. Both our approximate exchange inequality, and our extension of Pl\"ucker relations appear to be novel and might be of independent interest. 
    
    \section{Preliminaries}

    \label{sec:prelim}
    We use the notation $[n] = \set{1,\dots,n}$ for integers $n$. We denote the family of subsets of size $k$ from $[n]$ by $\binom{[n]}{k}$. We use $S \Delta T=(S\setminus T)\cup (T\setminus S)$ to denote the symmetric set difference between $S$ and $T$. When $m,n,k$ are clear from context, we denote by $\I$ the family of valid submatrix index pairs for $k\times k$ submatrices
    \[ \I:= \binom{[m]}{k}\times \binom{[n]}{k}. \]

    For a pair $S=(\rows{S}, \cols{S})\in \I$, and a matrix $A\in \R^{m\times n}$, we denote by $A_S=A_{\rows{S},\cols{S}}$ the submatrix of $A$ with rows and columns indexed by $\rows{S}, \cols{S}$ respectively. We extend set operations, such as $\Delta$ to pairs of sets denoting row and column indices in the natural way. For example for $S=(\rows{S}, \cols{S})$ and $U=(\rows{U}, \cols{U})$ we let $S\Delta U=(\rows{S}\Delta\rows{U},\cols{S}\Delta \cols{U})$. Similarly we let $\card{S}=\card{\rows{S}}+\card{\cols{S}}$. The reader might wish to think of pairs of row and column index sets as one single set, with the caveat that row indices are distinguished from column indices.

    Throughout the paper, we keep the input matrix $A \in \R^{m\times n}$ for subdeterminant maximization fixed. We also assume, w.l.o.g.\ that $m\leq n$. For $S=(\rows{S}, \cols{S}) \in \I$, we use $[S]=[\rows{S},\cols{S}]$ and $[A_S]=[A_{\rows{S},\cols{S}}]$, interchangeably as a shorthand for $\det(A_S)=\det(A_{\rows{S},\cols{S}})$.
    
    In \cref{sec:crude}, we use the following famous formula for determinants of rectangular matrix products.
    \begin{fact}[Cauchy-Binet Formula]
    	Let $A\in \R^{m\times n}$ and $B\in \R^{n\times m}$. Then
    	\[ \det(AB)=\sum_{S\in \binom{[n]}{m}} \det(A_{[m], S})\det(B_{S, [m]}). \]
    \end{fact}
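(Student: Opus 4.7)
The plan is to derive the formula directly from the Leibniz definition of the determinant: expand each entry of $AB$ multilinearly, swap the order of summation, and identify the resulting terms as products of $m\times m$ subdeterminants of $A$ and $B$.

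First I would start from
\[
    \det(AB) = \sum_{\sigma} \operatorname{sgn}(\sigma) \prod_{i=1}^m (AB)_{i,\sigma(i)},
\]
where the sum ranges over permutations $\sigma$ of $[m]$, and substitute $(AB)_{i,\sigma(i)} = \sum_{k=1}^n A_{i,k} B_{k,\sigma(i)}$. Distributing the product of sums and swapping the order of summation introduces an outer sum over functions $f : [m] \to [n]$, yielding
\[
    \det(AB) = \sum_{f : [m] \to [n]} \Bigl( \prod_{i=1}^m A_{i, f(i)} \Bigr) \sum_{\sigma} \operatorname{sgn}(\sigma) \prod_{i=1}^m B_{f(i), \sigma(i)}.
\]
The inner sum is exactly the determinant of the $m\times m$ matrix whose $i$-th row is $B_{f(i), \cdot}$. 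If $f$ is not injective, this matrix has a repeated row, the inner determinant vanishes, and so only injective $f$ contribute.

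Next I would organize the surviving injective $f$ by image $S = f([m]) \in \binom{[n]}{m}$. Writing the elements of $S$ in increasing order as $s_1 < \cdots < s_m$, each injective $f$ with image $S$ corresponds to a unique permutation $\tau$ of $[m]$ via $f(i) = s_{\tau(i)}$. Reordering the rows of $B$ to the canonical order of $S$ extracts a sign $\operatorname{sgn}(\tau)$, so the inner sum equals $\operatorname{sgn}(\tau) \det(B_{S,[m]})$. Collecting the $\tau$-sum with the $A$-factor gives
\[
    \det(B_{S,[m]}) \sum_{\tau} \operatorname{sgn}(\tau) \prod_{i=1}^m A_{i, s_{\tau(i)}} = \det(A_{[m], S}) \det(B_{S,[m]}),
\]
and summing over $S \in \binom{[n]}{m}$ yields the claimed identity.

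The only delicate step is tracking the sign when reordering the rows of $B$ indexed by $f(1), \dots, f(m)$ into the canonical order of $S$; once that bookkeeping is handled correctly, the formula falls out immediately. I expect this sign accounting to be the main place to be careful, but no deeper obstacle is present, since Cauchy--Binet is ultimately a consequence of the multilinearity and alternating nature of the Leibniz expansion. A more conceptual alternative would be to apply a Laplace expansion to the block matrix formed from $A$, $B$, $-I_n$, and $0$, evaluating its determinant two ways, but the Leibniz route above is the most elementary.
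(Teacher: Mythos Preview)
Your proof sketch is correct and is one of the standard textbook derivations of the Cauchy--Binet formula. The paper, however, does not prove this statement at all: it is stated as a \emph{Fact} in the preliminaries and used as a black box in \cref{sec:crude}, so there is no proof in the paper to compare against.
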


    For indices $S=(\rows{S}, \cols{S}), T=(\rows{T}, \cols{T}) \in \I $, let 
    \[ d(S, T): = \card{S\Delta T}/2=\card{\rows{S} \Delta \rows{T}}/2 + \card{\cols{S} \Delta \cols{T}}/2 \]
    be the \textit{distance} between $S$ and $T$.
    
    Armed with this distance, we can define the neighborhoods of a submatrix indexed by $S\in \I$:
    \begin{definition}
    	For $r \geq 0$ let the $r$-neighborhood of $S\in \I$ be 
    	\[ \cN_r(S) := \set{T \in \I \given d(S, T) \leq r }. \]
    \end{definition}
    
    \section{Subdeterminant Maximization via Local Search}\label{sec:local-search}
    In this section we prove our main result, \cref{thm:main}. Our strategy is to use a simple local search that starts with a submatrix indexed by $S\in \I$, and myopically  finds better and better solutions by searching $2$-neighborhoods until no more improvement can be found. 
    
    To make sure that our algorithm terminates within polynomial time, we will only take improvements that increase the magnitude of the determinant by at least a lower multiplicative threshold; for our purposes, even a factor $2$ improvement works. We will then show how to find a good \emph{start}, needed to bound the number of local search steps, by bootstrapping with the help of a crude approximation algorithm.
    
    We will find a locally approximately maximum solution as defined below.
    \begin{definition}
    	For $\alpha > 0,$ we say $S\in \I$ is an $(r,\alpha)$-local maximum if \[ \abs{\det(A_S)} \geq \alpha \abs{\det(A_T)}\] for all $T\in \cN_r(S)$. 
    \end{definition}
    
    \Cref{alg:localsearch} finds this locally approximate maximum. It starts with some arbitrary solution $S_0\in \I$, and iteratively finds $\alpha$-factor improvements within the $2$-neighborhood, until no more improvement can be found.
    \begin{algorithm}
	\SetAlgoLined
	 Let $S\leftarrow S_0$
	 
	 \While{there is $T\in \cN_2(S)$ such that $\alpha \abs{\det(A_T)}> \abs{\det(A_S)}$}{
	   Let $S\leftarrow T$
	 } %\;
	 Output $S=(\rows{S},\cols{S})$
	 \caption{$\alpha$-Local Search}\label{alg:localsearch}
	\end{algorithm}
    
    It is immediate to see that when \cref{alg:localsearch} terminates, the output is a $(r,\alpha)$-local maximum.
    \begin{proposition}
    	The output of \cref{alg:localsearch} is a $(2,\alpha)$-local maximum.
    \end{proposition}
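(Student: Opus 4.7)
The plan is to observe that this proposition is essentially immediate from the exit condition of the while loop in \cref{alg:localsearch}. The loop continues precisely as long as there exists some $T\in \cN_2(S)$ satisfying $\alpha\abs{\det(A_T)}>\abs{\det(A_S)}$. Upon termination this condition must fail, so for every $T\in \cN_2(S)$ we have $\alpha\abs{\det(A_T)}\leq \abs{\det(A_S)}$, or equivalently $\abs{\det(A_S)}\geq \alpha\abs{\det(A_T)}$.

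Comparing with the definition of an $(r,\alpha)$-local maximum specialized to $r=2$, we see this is exactly the required inequality over $\cN_2(S)$. Hence the output $S$ is a $(2,\alpha)$-local maximum by construction. There is no real obstacle in this proof: the statement is purely about the invariant that holds when the loop exits, and does not require a termination argument (termination, bounded by the $\alpha$-threshold improvements, is addressed separately). The full proof should be a one-line unpacking of the negation of the loop condition.
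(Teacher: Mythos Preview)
Your proposal is correct and matches the paper's approach: the paper simply states that the claim is immediate from the termination condition of the while loop and gives no further argument, which is exactly what you have spelled out.
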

    The most challenging part of local search algorithms is proving that local (approximate) optimality implies global (approximate) optimality. We appeal to approximate exchange properties that we prove for $k\times k$ subdeterminants, and show the following statement in \cref{sec:exchange}.
    \begin{lemma}\label{lem:local-to-global}
    	Suppose that $S\in \I$ is a $(2, \alpha)$-local maximum. Then $S$ is a $(k/\alpha)^{O(k)}$-approximate global optimum:
    	\[ (k/\alpha)^{O(k)}\cdot \abs{\det(A_S)}\geq \maxdet_k(A). \]
    \end{lemma}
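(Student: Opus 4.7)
The plan is an iterative transport argument that turns any global optimum $S^* \in \I$ into $S$ through a short sequence of small exchanges, paying only a factor of $k^{O(1)}/\alpha$ per step. Fix $S^* \in \I$ with $|\det(A_{S^*})| = \maxdet_k(A)$, and (for the substantive case) assume $|\det(A_S)| > 0$. Setting $S^{(0)} := S^*$, at each step $t$ I would invoke the approximate exchange inequality \cref{eq:intro-exchange} (to be proved in \cref{sec:exchange}) on the pair $(S, S^{(t)})$, which supplies an exchange set $dS^{(t)} \subseteq S \Delta S^{(t)}$ with $|dS^{(t)}| \in \{2, 4\}$ such that
\[
    |\det(A_S)| \cdot |\det(A_{S^{(t)}})| \leq k^{O(1)} \cdot |\det(A_{S \Delta dS^{(t)}})| \cdot |\det(A_{S^{(t+1)}})|,
\]
where $S^{(t+1)} := S^{(t)} \Delta dS^{(t)}$.

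Two elementary observations then drive the induction. First, since $|dS^{(t)}|/2 \leq 2$, the pair $S \Delta dS^{(t)}$ lies in $\cN_2(S)$, so the $(2,\alpha)$-local-maximum hypothesis gives $|\det(A_{S \Delta dS^{(t)}})| \leq |\det(A_S)|/\alpha$; dividing the display above by $|\det(A_S)|$ produces the per-step bound $|\det(A_{S^{(t)}})| \leq (k^{O(1)}/\alpha) \cdot |\det(A_{S^{(t+1)}})|$. Second, because $dS^{(t)} \subseteq S \Delta S^{(t)}$, elementary symmetric-difference arithmetic gives $d(S, S^{(t+1)}) = d(S, S^{(t)}) - |dS^{(t)}|/2$, so the distance to $S$ strictly decreases by at least $1$ each step. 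Since $d(S, S^{(0)}) \leq 2k$, the sequence reaches $S^{(T)} = S$ within $T \leq 2k$ iterations, and telescoping the per-step inequalities yields
\[
    \maxdet_k(A) = |\det(A_{S^*})| \leq (k^{O(1)}/\alpha)^{T} |\det(A_S)| = (k/\alpha)^{O(k)} |\det(A_S)|,
\]
which is exactly the required bound.

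The genuinely hard part of this lemma is not the chaining above but the approximate exchange inequality itself, which is the paper's main technical contribution and will be established via the extended Pl\"ucker relations; the induction here is merely the packaging that converts that inequality into a local-to-global guarantee. A minor loose end concerns the degenerate case $|\det(A_S)| = 0$, which is not directly handled by the chain (both sides of \cref{eq:intro-exchange} collapse to trivial identities) and must be dispatched either by restricting attention to outputs of \cref{alg:localsearch} from a bootstrapped starting point with $|\det(A_{S_0})| > 0$ whenever $\maxdet_k(A) > 0$, or by a short perturbation argument that replaces $A$ with a generic $A + \varepsilon E$, applies the result, and passes to the limit $\varepsilon \to 0$.
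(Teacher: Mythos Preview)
Your proposal is correct and follows essentially the same approach as the paper: the paper's proof packages the exchange step as a claim (\cref{clm:reduceDistance}) showing that for any $T$ there is a $W$ with $d(S,W)\leq d(S,T)-1$ and $\abs{\det(A_T)}\leq O(k^2/\alpha)\abs{\det(A_W)}$, then iterates this up to $2k$ times starting from the global optimum $L$---exactly your chaining argument with $S^{(t)}$. Your treatment of the degenerate case $\abs{\det(A_S)}=0$ is slightly more explicit than the paper's (which silently divides by $\abs{\det(A_S)}$ and defers the issue to the proof of \cref{thm:main}); note, however, that your perturbation suggestion is a bit delicate since a $(2,\alpha)$-local maximum for $A$ need not remain one for $A+\varepsilon E$, so the bootstrapping route is the cleaner fix.
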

    We prove the remaining part of \cref{thm:main}, that with a suitable choice for $S_0$, \cref{alg:localsearch} runs in polynomial time.
    \begin{proposition}\label{prop:steps-bound}
    	The number of steps taken by \cref{alg:localsearch} starting from $S_0$ is at most
    	\[ \log_{1/\alpha}\parens*{\frac{\maxdet_k(A)}{\abs{\det(A_{S_0})}}}. \]
    \end{proposition}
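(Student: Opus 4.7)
The plan is to track the quantity $|\det(A_S)|$ across iterations and observe that it grows by a fixed multiplicative factor in each step, which immediately bounds the number of steps via a telescoping argument.

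More precisely, let $S^{(0)}=S_0, S^{(1)}, S^{(2)}, \dots$ denote the sequence of candidate solutions maintained by \cref{alg:localsearch}. By the while-loop guard, each transition $S^{(t)} \to S^{(t+1)}$ is performed only when
\[
\alpha \cdot \abs{\det(A_{S^{(t+1)}})} > \abs{\det(A_{S^{(t)}})},
\]
i.e.\ $\abs{\det(A_{S^{(t+1)}})} > (1/\alpha) \cdot \abs{\det(A_{S^{(t)}})}$. Iterating this strict inequality from $t=0$ up to the number of steps $T$ yields
\[
\abs{\det(A_{S^{(T)}})} > (1/\alpha)^T \cdot \abs{\det(A_{S_0})}.
\]

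On the other hand, $S^{(T)} \in \I$, so by definition $\abs{\det(A_{S^{(T)}})} \leq \maxdet_k(A)$. Combining the two bounds gives $(1/\alpha)^T < \maxdet_k(A)/\abs{\det(A_{S_0})}$, and taking $\log_{1/\alpha}$ of both sides (which is monotone since we may assume $\alpha < 1$, as otherwise the algorithm performs no steps at all and the claim is vacuous) produces the claimed bound
\[
T \leq \log_{1/\alpha}\!\parens*{\frac{\maxdet_k(A)}{\abs{\det(A_{S_0})}}}.
\]

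There is no real obstacle here; the argument is a direct consequence of the multiplicative improvement threshold built into the algorithm. The only minor subtlety is to handle the edge case $\alpha \geq 1$ (equivalently, $\log_{1/\alpha}$ being undefined or non-monotone), but in that regime the loop never executes, so the statement holds trivially. The substantive content of bounding the running time ultimately depends on choosing $S_0$ so that $\maxdet_k(A)/|\det(A_{S_0})|$ is at most an exponential in $\poly(k, m, n)$; this is where the ``crude approximation'' bootstrapping referenced in the preceding discussion enters, but that is outside the scope of this proposition.
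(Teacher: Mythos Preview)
Your proof is correct and follows exactly the same argument as the paper: each step multiplies $|\det(A_S)|$ by at least $1/\alpha$, while this quantity can never exceed $\maxdet_k(A)$, so the number of steps is bounded as claimed. Your write-up is just a slightly more explicit version of the paper's two-sentence proof.
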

    \begin{proof}
    	Each iteration improves $\abs{\det(A_S)}$ by a factor of $1/\alpha$. On the other hand, this value can never exceed $\maxdet_k(A)$, and it starts as $\abs{\det(A_{S_0})}$.
    \end{proof}
    In \cref{sec:crude}, we show how to obtain a good $S_0$ by a crude algorithm, that appeals to known results for the case of $k=\min\set{m, n}$. We will formally show the following.
    \begin{lemma}\label{lem:crude}
    	There is a polynomial time algorithm that returns $S_0$ with
    	\[ (n+m)^{O(k)} \cdot \abs{\det(A_{S_0})}\geq \maxdet_k(A). \]
    \end{lemma}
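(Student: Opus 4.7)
My plan is to reduce general $k\times k$ subdeterminant maximization to two successive applications of the full-dimensional case $k=\min\set{m,n}$, for which a $2^{O(k)}$-approximation is known \cite{Nik15}. The key observation is that Cauchy-Binet lets us pass between the maximum $k$-volume of a $k\times n$ submatrix of $A$ and the largest $k\times k$ subdeterminant sitting inside it at a cost of only $\sqrt{\binom{n}{k}}\leq (n+m)^{O(k)}$.

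Concretely, stage one would apply a $2^{O(k)}$-approximation for D-optimal design (i.e., the full-dimensional subdeterminant problem) to pick $k$ rows of $A$, viewed as vectors in $\R^n$, maximizing the $k$-dimensional volume $\sqrt{\det(A_{I,[n]} A_{I,[n]}^T)}$. Stage two sets $B:=A_{I,[n]}\in\R^{k\times n}$ and applies the same algorithm to the (now full-dimensional) problem of finding $J\in\binom{[n]}{k}$ maximizing $\abs{\det(B_{[k],J})}=\abs{\det(A_{I,J})}$. I would output $S_0:=(I,J)$.

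For the analysis, let $(I^*,J^*)$ be optimal, so $\abs{\det(A_{I^*,J^*})}=\maxdet_k(A)$. Applying Cauchy-Binet to the row set $I^*$ gives $\det(A_{I^*,[n]}A_{I^*,[n]}^T)\geq \det(A_{I^*,J^*})^2=\maxdet_k(A)^2$, so after stage one we have $\sqrt{\det(A_{I,[n]}A_{I,[n]}^T)}\geq 2^{-O(k)}\maxdet_k(A)$. A second Cauchy-Binet step averages over the $\binom{n}{k}$ terms in $\det(BB^T)=\sum_{J'}\det(A_{I,J'})^2$, yielding $\max_{J'}\abs{\det(A_{I,J'})}\geq 2^{-O(k)}\maxdet_k(A)/\sqrt{\binom{n}{k}}$; the stage-two guarantee then loses a further $2^{O(k)}$, giving $\abs{\det(A_{S_0})}\geq \maxdet_k(A)/\bigl(2^{O(k)}\sqrt{\binom{n}{k}}\bigr)\geq (n+m)^{-O(k)}\maxdet_k(A)$.

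The only mild obstacle is bookkeeping: verifying that the $2^{O(k)}$-approximation can be invoked in the form required by stage one (picking $k$ out of $m$ vectors in $\R^n$ with $k\leq n$), which is the standard D-optimal design formulation. Even the weaker $k^{O(k)}$-bound of Khachiyan would suffice in both stages and still yield the claimed $(n+m)^{O(k)}$ overall factor.
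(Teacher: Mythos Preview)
Your proposal is correct and essentially identical to the paper's own proof: the paper also applies \textcite{Nik15} first to $B=AA^\intercal$ to select rows $\rows{S}$, then to $D=(A_{\rows{S},[n]})^\intercal A_{\rows{S},[n]}$ to select columns $\cols{S}$, and analyzes both steps via Cauchy--Binet in exactly the way you describe. The only cosmetic difference is that the paper phrases both stages as maximizing a principal minor of a PSD matrix, whereas you phrase stage one as D-optimal design and stage two as the full-dimensional case---but these are the same invocations of the same black box.
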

    Having all the ingredients for \cref{thm:main}, we finish its proof.
    \begin{proof}[Proof of \cref{thm:main}]
    	We set $\alpha$ to be some constant below $1$, say $1/2$. We first apply \cref{lem:crude} to obtain a good starting point $S_0$. If $\det(A_{S_0})=0$, then $\maxdet_k(A)=0$, and there is nothing to be done. Otherwise, we run \cref{alg:localsearch} with $\alpha=1/2$. The output of the algorithm, $S$, is a $(2,1/2)$-local maximum, which by \cref{lem:local-to-global}, is a $(2k)^{O(k)}=k^{O(k)}$-approximate solution.
    	
    	Each iteration of \cref{alg:localsearch} clearly runs in polynomial time, since $\cN_2(S)$ has at most $O(k^2(m+n)^2)$ elements. So we just need to bound the number of iterations. But by \cref{lem:crude,prop:steps-bound}, the number of steps is at most
    	\[ \log\parens*{(n+m)^{O(k)}}=O(k \log(m+n)). \]
    \end{proof}
    \begin{remark} The approximation factor of $k^{O(k)}$ is the best possible for local search, even when we consider $(c,\alpha)$-local maxima for any constant number of row/column swaps $c \in \Z_{> 0}$. This is true even for the special case of $k=\min\set{m, n}$. To see why, consider the $\maxdet_k(A)$ problem on input $A\in \R^{k\times 2k}$ defined by the block form
    \[ A = \begin{bmatrix}I_k & c^{-\frac{1}{2}} H_k\end{bmatrix}  \]
    
     where $H_k \in \R^{k\times k}$ is the Hadamard matrix, a matrix with $\pm 1$ entries whose columns are orthogonal to each other. Observe that $A_{[k], [k]} = I_k$ is a $(c,1)$-local maximum, since for any $(I,J) \in \mathcal{N}_c([k],[k])$, after rearranging rows and columns, we can write 
    \[
    	\abs{\det(A_{I,J})} = \abs*{\det\begin{bmatrix} I_{k-c} &  \star \\ 0 & c^{-\frac{1}{2}} D \end{bmatrix}} =\abs{\det (c^{-\frac{1}{2}} D)} \leq 1,
    \]
    where $D \in \{\pm 1\}^{c\times c},$ and $\abs{\det(D)} \leq c^{\frac{c}{2}}$ by the Hadamard inequality. However the global optimum is achieved by the Hadamard matrix part of $A$. Letting $J^* =\set{k+1, \cdots , 2k}$,
    \[ \abs*{\det(A_{[n], J^*) }}  = \abs*{\det(c^{-\frac{1}{2}} H_k)} = \parens*{\frac{k}{c}}^{\frac{k}{2}} \abs*{\det(A_{[k], [k]})}.\]
    In other words, the local optimum is worse than the global optimum by a factor of $(k/c)^{k/2}$.
    \end{remark}
    \section{Approximate Exchange and Local to Global Optimality}\label{sec:exchange}
    
    Here we prove \cref{lem:local-to-global}. Our main tool will be an exchange property, that we state below. First we define the notion of an $r$-exchange.
    
    \begin{definition}\label{def:exchanges}
    	Let $S, T\in \I$ denote two submatrices. We call $U=(\rows{U},\cols{U})$ an $r$-exchange between $S$ and $T$, if $S\Delta U$ and $T\Delta U$ are still indices of $k\times k$ submatrices, $U\subseteq S\Delta T$, and $\card{U}=2r$. Note that $U$ simply represents the exchange of $r$ pairs of rows and/or columns between $S$ and $T$. We denote by $\cE(S, T)$, the set of all $1$-exchanges and $2$-exchanges between $S$ and $T$.
    \end{definition}
    
    Now we are ready to state the key ingredient for proving local to global optimality.
    
    \begin{theorem}[Exchange Property]\label{thm:exchange-property}
    Let $S, T\in \I$ be indices of two $k\times k$ submatrices, and assume that $S\neq T$. Then
    \[ \abs{\det(A_S)}\cdot\abs{\det(A_T)}\leq O(k^2) \max\set*{\abs{\det(A_{S\Delta U})}\cdot\abs{\det(A_{T\Delta U})}\given U\in \cE(S, T)}. \]
    \end{theorem}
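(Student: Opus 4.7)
The strategy is to reduce the theorem to an \emph{extended Pl\"ucker identity} of the form
\[
[S] \cdot [T] \;=\; \sum_{U \in \cE(S,T)} c_U \cdot [S\Delta U] \cdot [T\Delta U]
\]
with real coefficients satisfying $\sum_U \abs{c_U} = O(k^2)$. Given such an identity, the triangle inequality on absolute values immediately yields
\[
\abs{[S][T]} \;\leq\; \parens*{\sum_U \abs{c_U}} \cdot \max_{U \in \cE(S,T)} \abs{[S\Delta U] \cdot [T\Delta U]} \;=\; O(k^2) \cdot \max_{U \in \cE(S,T)} \abs{[S\Delta U]\cdot[T\Delta U]},
\]
which is the desired bound. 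All of the work therefore lies in producing this identity.

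\textbf{The easy cases: equal row or column set.} If $\rows{S} = \rows{T}$, then $[S]$ and $[T]$ are both $k\times k$ minors of the common $k \times n$ matrix $A_{\rows{S},\cdot}$, and the classical Pl\"ucker relation applied to this matrix directly expresses $[S]\cdot[T]$ as a signed sum of at most $k$ column $1$-exchange products, each with coefficient $\pm 1$. The case $\cols{S} = \cols{T}$ is symmetric, applying classical Pl\"ucker to $A_{\cdot,\cols{S}}$ along rows. Neither case uses a $2$-exchange, and both are strictly easier than the bound claimed.

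\textbf{The main case: both rows and columns differ.} I would embed the $k\times k$ minors of $A$ into Pl\"ucker coordinates of the bordered matrix $B := [\,I_m \mid A\,] \in \R^{m \times (m+n)}$: for any $(I, J) \in \I$, the $m \times m$ minor of $B$ on column set $([m] \setminus I) \cup (m + J)$ equals $\pm \det(A_{I, J})$. Applying the classical Pl\"ucker relation on $B$ to the two $m \times m$ minors corresponding to $S$ and $T$ produces an identity with $O(k)$ unit-coefficient terms. Each term swaps a single index between these two Pl\"ucker coordinates. When the swapped pair lies entirely in the $[m]$-block of $B$, the resulting term is a row $1$-exchange of $A$; when it lies entirely in the $m+[n]$-block, a column $1$-exchange. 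The problematic terms are the \emph{cross} swaps, where one index is in $[m]$ and the other in $m+[n]$: these do not correspond to a product of two $k \times k$ minors of $A$, but rather to a product of a $(k+1)\times(k+1)$ and a $(k-1)\times(k-1)$ minor. To rewrite each cross term in the required form I would invoke a second, Sylvester/Desnanot-Jacobi-style determinantal identity which expands such a $(k+1)\times(k+1)\cdot(k-1)\times(k-1)$ minor product of $A$ as an $O(k)$-sum of $k\times k \cdot k\times k$ minor products, each corresponding to a simultaneous row-and-column swap, i.e.\ a $2$-exchange in our sense. Composing the outer Pl\"ucker expansion ($O(k)$ terms) with the inner correction ($O(k)$ terms per cross term) yields an identity with $\sum_U \abs{c_U} = O(k^2)$, as desired.

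\textbf{Main obstacle.} The principal difficulty is executing the cross-term correction while keeping every resulting exchange inside $\cE(S,T)$: one must verify that $U \subseteq S\Delta T$ (so no swap touches a common row or column), that $S\Delta U$ and $T\Delta U$ are still valid $k\times k$ index pairs, and that $\card{U} \in \{2,4\}$. Coupled with careful sign tracking across the two layers of identities and ruling out terms that cannot be interpreted as $1$- or $2$-exchanges, this combinatorial bookkeeping is where the novelty of the extended Pl\"ucker relation really lives; the rest of the argument is purely triangle inequality.
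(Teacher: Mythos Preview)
Your high-level plan --- derive a Pl\"ucker-type identity and then apply the triangle inequality --- is exactly what the paper does. The divergence is entirely in how the identity is obtained.

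The paper does not go through the embedding $B=[I_m\mid A]$. Instead it proves the identity by brute force: first for disjoint $S,T$ it Laplace-expands $[S\Delta\{i,i'\},\cols{S}\Delta\{j,j'\}]$ along the foreign row $i'$ and column $j'$, multiplies out, and regroups the resulting monomials into three sums $s_1,s_2,s_3$; the general case is then reduced to the disjoint one by duplicating the common rows/columns and observing that terms touching a duplicated index vanish. The identity obtained is
\[
\bigl((k-r)^2+(k-c)^2+(r-c)^2\bigr)\,[S][T]\;=\;2s_1-2(k{-}1{+}r{-}2c)\hat s_2-2(k{-}1{+}c{-}2r)s_2-4(s_3+\hat s_3),
\]
with $r=|\rows{S}\cap\rows{T}|$, $c=|\cols{S}\cap\cols{T}|$. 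Note the left-hand coefficient is \emph{not} $1$; the $O(k^2)$ bound only emerges after showing the right-hand coefficient sum is $\le M\cdot O(k^2)$ and dividing through by $M>0$. Note also that the identity requires the double-row and double-column $2$-exchanges ($s_3$, $\hat s_3$), not only the mixed row/column $2$-exchanges your scheme is set up to produce.

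Your embedding idea is natural, and your description of the cross terms --- products of a $(k{-}1)\times(k{-}1)$ and a $(k{+}1)\times(k{+}1)$ minor --- is correct. The gap is the ``Sylvester/Desnanot--Jacobi-style'' correction. Desnanot--Jacobi (and Sylvester's identity in general) expresses $\det(M)\cdot\det(M')$ in terms of $k\times k$ minors only when $M'$ is a \emph{submatrix of} $M$; here the $(k{-}1)\times(k{-}1)$ factor $A_{\rows{S}-i_0,\,\cols{S}-j_0}$ is not contained in the $(k{+}1)\times(k{+}1)$ factor $A_{\rows{T}+i_0,\,\cols{T}+j_0}$ unless $S\subseteq T$, so those identities do not apply. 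If instead you iterate classical Pl\"ucker on $B$ to handle a cross term, each application again spawns cross terms of the same shape (try fixing any column index $m+j_1$: the swaps with $i''\in m+((\cols{S}\setminus\cols{T})-j_0)$ land you back in $(k{-}1)\times(k{+}1)$ products), so the recursion does not close after $O(k)$ steps. You have not exhibited the promised $O(k)$-term expansion of a single cross term, and the structure of the paper's identity --- nontrivial left-hand coefficient $M$, and the necessary presence of double-row and double-column $2$-exchanges --- suggests that an identity of the clean form $[S][T]=\sum c_U[\cdots]$ over only $1$-exchanges and mixed $2$-exchanges with $\sum|c_U|=O(k^2)$ may simply not exist. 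This is precisely the step where the paper's technical contribution lives, and it is missing from your proposal.
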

    
    Note that \cref{thm:exchange-property} can be thought of a form of discrete log-concavity for subdeterminants. Starting from submatrices $S, T$, we move to two ``nearby'' submatrices $S\Delta U$ and $T\Delta U$ that are closer to $T$ and $S$ respectively, and then we get that up to some error terms, the average log of the determinant goes up.
    
     We will prove \cref{thm:exchange-property} in \cref{sec:plucker} by appealing to a new extension of Pl\"ucker relations, which is an identity between subdeterminants. Here we show how to leverage \cref{thm:exchange-property} to show global approximate optimality from local approximate optimality. Our strategy is to start from $S$ being the locally optimal solution and $T$ being the globally optimal solution, and to gradually move from $T$ to $S$, accumulating at most a $(k/\alpha)^{O(k)}$ loss.
    
	\begin{proof}[Proof of \cref{lem:local-to-global} using \cref{thm:exchange-property}]
	Let $S\in \I$ be a $(2,\alpha)$-local maximum and let $L\in \I$ be the indices of a submatrix that has the highest subdeterminant in magnitude. We first prove the following claim.
	\begin{claim} \label{clm:reduceDistance}
	For any $T \in \I$, %If $d((S, \hat{S}), (T, \hat{T})) > 2$ then 
	there exists $W \in \I$ such that $d(S, W) \leq \max(0, d(S, T) -1)$ and
	\[ 
		\abs{\det(A_T )} \leq O(k^2/\alpha)\cdot \abs{\det(A_W)}.
	\]
	\end{claim}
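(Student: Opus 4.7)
The plan is to prove the claim by a single direct application of the Exchange Property (\cref{thm:exchange-property}), after handling the trivial base case. If $T = S$, set $W := S$: then $d(S,W) = 0 = \max(0, d(S,T)-1)$, and the inequality is immediate provided the hidden constant in $O(k^2/\alpha)$ is at least $1$. Assume now $T \neq S$, so \cref{thm:exchange-property} applies and produces some $U \in \cE(S,T)$ with $\card{U} \in \{2,4\}$ satisfying
\[ \abs{\det(A_S)} \cdot \abs{\det(A_T)} \leq O(k^2)\cdot \abs{\det(A_{S\Delta U})}\cdot \abs{\det(A_{T\Delta U})}. \]

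Next I would bound $\abs{\det(A_{S\Delta U})}$ using the local maximality hypothesis. Since $U \subseteq S\Delta T$ and $\card{U}/2 \in \{1,2\}$, we have $d(S, S\Delta U) = \card{U}/2 \leq 2$, so $S\Delta U \in \cN_2(S)$; by the $(2,\alpha)$-local maximality of $S$,
\[ \abs{\det(A_{S\Delta U})} \leq \tfrac{1}{\alpha}\abs{\det(A_S)}. \]
Substituting into the exchange inequality and dividing both sides by $\abs{\det(A_S)}$ (which is nonzero since \cref{alg:localsearch} only moves between nonzero-determinant submatrices, as noted in the proof of \cref{thm:main}) yields
\[ \abs{\det(A_T)} \leq O(k^2/\alpha)\cdot \abs{\det(A_{T\Delta U})}. \]
I then set $W := T\Delta U$. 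Using $U \subseteq S\Delta T$, a routine calculation gives $S\Delta W = S\Delta T\Delta U = (S\Delta T)\setminus U$, so $d(S,W) = d(S,T) - \card{U}/2 \leq d(S,T)-1$, as required.

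With the claim established, the remaining step (to be used after this claim in the proof of \cref{lem:local-to-global}) is to iterate: start from $T_0 := L$ and apply the claim repeatedly to produce a sequence $T_0, T_1, \ldots, T_N$ with $d(S, T_i)$ strictly decreasing until $T_N = S$. Since $d(S,L) \leq 2k$, this terminates within $2k$ steps, and chaining the inequalities gives $\abs{\det(A_L)} \leq (O(k^2/\alpha))^{2k} \abs{\det(A_S)} = (k/\alpha)^{O(k)}\abs{\det(A_S)}$.

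The main obstacle in the argument is really hidden in \cref{thm:exchange-property}, not in this claim: it is crucial that the exchange output $U$ is restricted to $1$- and $2$-exchanges (so that $S\Delta U$ stays within the 2-neighborhood where local optimality applies), and that $U \subseteq S\Delta T$ (so that applying it to $T$ strictly reduces distance to $S$). Both of these properties are built into the definition of $\cE(S,T)$ and will be used without further comment. The only subtle point within this proof is the need to divide by $\abs{\det(A_S)}$; this is legitimate because the preceding analysis of \cref{alg:localsearch} guarantees that the algorithm, when started from a $S_0$ with $\abs{\det(A_{S_0})} > 0$, produces only strictly improving iterates and thus ends at $S$ with $\abs{\det(A_S)} \geq \abs{\det(A_{S_0})} > 0$.
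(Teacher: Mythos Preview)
Your proof is correct and follows essentially the same approach as the paper's: handle the trivial case $T=S$, then apply \cref{thm:exchange-property} to obtain $U\in\cE(S,T)$, bound $\abs{\det(A_{S\Delta U})}$ by $\abs{\det(A_S)}/\alpha$ using local maximality, and set $W=T\Delta U$. You are in fact slightly more explicit than the paper in verifying that $d(S,W)=d(S,T)-\card{U}/2\le d(S,T)-1$ (using $U\subseteq S\Delta T$) and in justifying the division by $\abs{\det(A_S)}$.
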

	\begin{proof}[Proof of \cref{clm:reduceDistance}]
	If $T=S$ then the claim is trivially true, since we can take $W=S$. Assume  $T\neq S$.
	By \cref{thm:exchange-property}, there exists $U \in \cE(S, T)$ such that
	\begin{align*}
	    \abs{\det(A_S)} \cdot \abs{\det(A_T)}
	    &\leq O(k^2)   \cdot \abs{\det(A_{S\Delta U})} \cdot \abs{\det(A_{T\Delta U})} \\
	    &\leq O(k^2)\cdot  \frac{\abs{\det(A_S)}}{\alpha} \cdot  \abs{\det(A_{T\Delta U})}
	\end{align*}
	   where the last inequality follows from the definition of $(2,\alpha)$-local maximum.
	   
	   Setting $W =  T\Delta U $ and dividing both sides by $\abs{\det(A_S)}$ gives the desired inequality. 
	\end{proof}
	Note that initially $d(S, L) \leq 2k$. We can iteratively apply \cref{clm:reduceDistance} for up to $2k$ times, and obtain $W \in \I$ such that  
	$\abs{\det(A_L)} \leq O(k^2/\alpha)^{2k} \abs{\det(A_W)} $ 
	and 
	\[ d(S, W) \leq \max(0, d(S, L) -2k)=0. \]
	The latter condition implies $S=W$, and we are done. 
	\end{proof}
    
    \section{An Extension of Pl\"ucker Relations}\label{sec:plucker}
    
    In this section, we prove \cref{thm:exchange-property} by proving an extension of the Pl\"ucker relations. These are identities relating the $k\times k$ subdeterminants of a matrix. \Cref{thm:exchange-property} will be derived from applying the triangle inequality to these identities.
    
    To give some intuition, let us demonstrate why the regular Pl\"ucker relations, imply an exchange property when $k=\min\set{m, n}$;
    \subsection{Regular Pl\"ucker Relations and Exchange}
    W.l.o.g., let us take $k=m$ and assume $n\geq m$. Given any subsets $S, T\in \binom{[n]}{m},$ the classical Pl\"ucker relation \cite[see, e.g.,][]{DW91} states that, for any fixed $j\in T \setminus S$
	\begin{equation*}
	    \det(A_{[m], S}) \det(A_{[m],T}) = \sum_{i \in S \setminus T}\delta^{i}_j  \det(A_{[m], S\Delta \{i,j\} })\cdot \det(A_{[m],T \Delta\{i,j\} }),
	\end{equation*}
	where $\delta^{i}_j \in\{ \pm 1 \}$ is a sign determined by the indices $i$ and $j$. The triangle inequality then implies the following exchange property
	\begin{equation*}
	    \abs{\det(A_{[m], S})}\cdot \abs{\det(A_{[m],T})} \leq k \cdot \max\set*{\abs{\det(A_{[m], S\Delta \{i,j\} })}\cdot \abs{\det(A_{[m],T \Delta\{i,j\} })}\given i\in S\setminus T, j\in T\setminus S}.
	\end{equation*}
	This is an analog of \cref{thm:exchange-property}, but with just one exchange between $S$ and $T$. As we saw before, we cannot hope for just \emph{one exchange} in the general case of $k<\min\set{m, n}$. But we manage to prove an extended form of Pl\"ucker relations and, by appealing to the triangle inequality, prove \cref{thm:exchange-property}.
	
	\subsection{Extended Pl\"ucker Relations}\label{subsec:plucker}
	
    In this subsection, we state and prove a ``two-dimensional'' extension of Pl\"ucker relations. In trying to find this relationship, we did a bit of guesswork; we knew we were looking for an identity involving only neighbors of the submatrices $S$ and $T$, to make sure we can extract an exchange inequality. By running computer algebra systems on small values of $k$, we discovered the correct form of the identity, and then proceeded to prove it.
    
    Consider $S=(\rows{S}, \cols{S}), T=(\rows{T}, \cols{T}) \in \I$. Note that only the entries in $A_{S \cup T}$ matter and that permuting the rows and/or columns in $S\cup T$ will preserve determinants of $k\times k$ minors up to sign.
    
    We first show a Pl\"ucker relation for the case when $S$ and $T$ are disjoint, i.e., $\rows{S}\cap\rows{T}=\cols{S}\cap\cols{T}=\emptyset$.  %The general case will then follow by create new disjoint set with copied version of common rows and columns.
    W.l.o.g., we can assume that
    \begin{equation}\label{eq:positionDisjointCase}
        \rows{S} = \cols{S}= \set{1,\dots,k}\quad\text{and}\quad  \rows{T} = \cols{T}= \set{k+1, \cdots, 2k},
    \end{equation}
    and that $A$ has the following block form:
\[ A=
\left[
\begin{array}{c|c}
C & V \\
\hline
U & D
\end{array}.
\right]
\]
Note that $A_S=C$ and $A_T=D$.

We adopt a few notations for this section.%a few conventions.
\begin{itemize}
	\item We use $[\rows{U}, \cols{U}]$ to denote $\det(A_{\rows{U}, \cols{U}}).$
	\item Matrix entries are denoted by lowercase letter. Submatrices are denoted by uppercase letter. For example, we denote entries of submatrix $C$ by $c_{i,j}$ for $i\in \rows{S}, j \in \cols{S}$.
    \item For a set $L$ and $i\in L$ we use $L-i$ and $L^{-i}$ as short hand for $L\setminus \set{i}$. Let $r_L (i)$ denote the rank of $i$ in $L$, i.e., the number of $i'\in L$ that are smaller than $i$. %Let $S_i$ denote the $ith$ smallest element of $S.$ 
    \item 
    For $\rows{U} \subseteq \rows{S} \Delta \rows{T}, \cols{U} \subseteq \cols{S} \Delta \cols{T}$, let $\delta^{U}  = (-1)^{\sum_{i\in \rows{U}} r_*(i) + \sum_{j \in \cols{U}} r_*(j)},$ 
    where, with some abuse of notation we use $r_*$ for both row indices and column indices, and let \[r_*(i) = \begin{cases} r_{\rows{S}}(i)\text{ if $i\in \rows{S}$} \\ r_{\rows{T}}(i)\text{ if $i\in \rows{T}$} \end{cases} ,\quad  r_*(j) = \begin{cases} r_{\cols{S}}(j )\text{ if $j\in \cols{S}$} \\ r_{\cols{T}}(j)\text{ if $j\in \cols{T}$} \end{cases}.\]
\end{itemize}    

\begin{lemma}[Extended Pl\"ucker Relation in the Disjoint Case]  \label{lemma:disjointCase}
Consider $S=(\rows{S},\cols{S}), T=(\rows{T}, \cols{T})$ as in \cref{eq:positionDisjointCase}.

Let $\Omega : = (S, T)$. Define
\begin{equation} \label{eq:siDisjoint}
    \begin{split}
        &s_1(\Omega)=  \sum_{i, j, i', j' } \delta^{\{i, i'\}, \{j,j'\}} [\rows{S} \Delta \{i, i'\}, \cols{S} \Delta \{j,j'\} ] \times [\rows{T} \Delta \{i, i'\}, \cols{T} \Delta \{j,j'\} ]\\
        &s_2(\Omega) = (-1)^k \sum_{i, i'} \delta^{\{i,i'\}, \emptyset}   [\rows{S}\Delta\{i,i'\}, \cols{S}] \times [\rows{T}\Delta\{i,i'\}, \cols{T}]\\
&s_3(\Omega) = \sum_{i<h, i'<h'}\delta^{\{i,i',h,h'\}, \emptyset} [\rows{S}\Delta\{i,h, i', h'\}, \cols{S}]\times [\rows{T}\Delta \{i,h, i', h'\}, \cols{T}], 
    \end{split}
\end{equation}
where in above summations,
$i,h\in \rows{S}, i',h' \in \rows{T}, j \in \cols{S}, j' \in \cols{T}.$ 

Let $s_i: =s_i(\Omega).$ Then, we have the following relation
\begin{equation} \label{eq:pluckerRelation}
    s_1 -  2 (k-1) s_2 - 4 s_3 - k^2 [\rows{S},\cols{S}] [\rows{T}, \cols{T}] = 0
\end{equation}
\end{lemma}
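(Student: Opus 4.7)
The plan is to verify this polynomial identity in the entries of $A$ directly, by expanding every $k\times k$ determinant via the Leibniz formula and checking the identity monomial by monomial. Every monomial arising in a product of two disjoint $k\times k$ minors has the form $\operatorname{sgn}(\pi)\prod_{l=1}^{2k} a_{l,\pi(l)}$ for a bijection $\pi : \rows{S}\cup\rows{T}\to\cols{S}\cup\cols{T}$. It therefore suffices to show that for every such $\pi$, the net signed contribution to $s_1 - 2(k-1)s_2 - 4s_3 - k^2[S][T]$ vanishes.

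Classify each $\pi$ by its \emph{type} $t(\pi) := \card{\set{r\in\rows{S}\given \pi(r)\in\cols{T}}}$; by an elementary size count this equals $\card{\set{r\in\rows{T}\given \pi(r)\in\cols{S}}}$. Only type-$0$ bijections contribute to $[S][T]$. To determine $\pi$'s contribution to $s_1$, one asks which swap quadruples $(i,i',j,j')$ have the property that $\pi$ sends $\rows{S}\Delta\{i,i'\}$ bijectively onto $\cols{S}\Delta\{j,j'\}$; this reduces to case analysis on $\pi(i),\pi(i')$ and depends only on $t(\pi)$. Analogous analyses handle $s_2$ (a single row swap, no column swap) and $s_3$ (two row swaps, no column swap). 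Summing the compatible contributions for each $t$ with the prescribed $\delta$-signs must yield zero, and the coefficients $k^2$, $-2(k-1)$, and $-4$ turn out to be precisely those needed for cancellation across the different values of $t$.

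The main obstacle is sign bookkeeping, because two sources of sign interact: the explicit factor $\delta^U$ built from ranks of indices, and the sign of $\pi$ viewed as a bijection on the swapped index sets, which differs from its restriction to the original sets by the transpositions needed to reorder the swapped rows and columns into canonical position. It helps to work in the ordering $\rows{S}=\cols{S}=[k]$ and $\rows{T}=\cols{T}=\set{k+1,\dots,2k}$ fixed in~\eqref{eq:positionDisjointCase}: in this ordering both the rank values and the transposition counts are readable directly from the numerical indices, and the factorization $\delta^{\{i,i'\},\{j,j'\}} = \delta^{\{i,i'\},\emptyset}\cdot\delta^{\emptyset,\{j,j'\}}$ separates row from column sign contributions cleanly.

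As consistency checks, the identity collapses to $s_1 = k^2$ with $s_2 = s_3 = 0$ when $A = I_{2k}$; and for a rank-$k$ matrix $A = XY$ (with $X$ of size $2k\times k$ and $Y$ of size $k\times 2k$) each $k\times k$ minor factors as $\det(X_I)\det(Y_J)$, so the classical Plücker relation applied separately to $X$ and to $Y$ forces $s_1 = k^2[S][T]$ and $s_2 = (-1)^k k\,[S][T]$, determining $s_3 = (-1)^{k+1}\binom{k}{2}[S][T]$, in accord with the claimed identity. These special cases guide both the statement and the sign conventions of $\delta^U$, and reduce the general proof to the clean combinatorial cancellation described above.
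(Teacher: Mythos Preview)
Your plan is a genuinely different route from the paper's argument. The paper does not expand via Leibniz; instead it writes $A$ in block form $\begin{bmatrix}C & V\\ U & D\end{bmatrix}$, Laplace-expands each factor $[\rows{S}\Delta\{i,i'\},\cols{S}\Delta\{j,j'\}]$ along the ``foreign'' row $i'$ and column $j'$, and then splits the resulting product according to its degree in the off-diagonal variables $U,V$. The degree-$0$ piece rebuilds $k^2[S][T]$, the degree-$1$ piece collapses (after regrouping the inner sums back into cofactor expansions) to $2(k-1)s_2$, and the degree-$2$ piece to $4s_3$. Your ``type'' $t(\pi)$ is essentially this same degree grading seen at the level of a single Leibniz monomial, so the two organizing principles are cousins; but the paper works with determinants throughout and never descends to individual permutations, which keeps the sign bookkeeping confined to a few Laplace expansions rather than a global $\operatorname{sgn}(\pi)$-versus-restriction comparison.

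That said, what you have written is a plan, not a proof. You correctly identify the main obstacle as sign bookkeeping, and then do not do any of it: the sentence ``Summing the compatible contributions for each $t$ with the prescribed $\delta$-signs must yield zero, and the coefficients $k^2$, $-2(k-1)$, and $-4$ turn out to be precisely those needed'' is the entire content of the lemma, asserted rather than shown. Two specific points are missing. First, you claim the net signed contribution of a fixed $\pi$ to each $s_i$ depends only on $t(\pi)$; the \emph{count} of compatible swap tuples does (your case analysis is on the right track), but you also need that the product of $\delta^{U}$ with the sign relating $\operatorname{sgn}(\sigma_1)\operatorname{sgn}(\sigma_2)$ to $\operatorname{sgn}(\pi)$ is the same across all compatible tuples and across all $\pi$ of a given type --- this is where the actual work lies and it is not automatic. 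Second, even granting that reduction, you never display the resulting function of $t$ for each $s_i$ and check that the linear combination vanishes for every $t\in\{0,1,\dots,k\}$ (in particular $t\ge 3$, where only $s_1$ contributes, needs its own cancellation mechanism that your outline does not address). The consistency checks at the end are reassuring but do not substitute for this computation.
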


The proof is elementary; we only use well-known identities about the determinant and perform some algebraic manipulation.

\begin{proof}[Proof of \cref{lemma:disjointCase}]
Expanding $[\rows{S} \Delta \{i, i'\}, \cols{S} \Delta \{j,j'\} ]$ along row $i'$, we get:
\[ [\rows{S} \Delta \{i, i'\}, \cols{S} \Delta \{j,j'\} ]   =   d_{i', j'} [\rows{S} -i, \cols{S} - j] + \sum_{\ell \in \cols{S}\setminus \{ j\} } (-1)^{k+r_{\cols{S}^{-j}}(\ell) } u_{i', \ell } [\rows{S}-i, \cols{S} -j-\ell  + j' ]. \]
Expanding $[\rows{S}-i, \cols{S} -j -\ell + j' ]$ along column $j'$, we get:
\[ [\rows{S}-i, \cols{S} -j -\ell + j' ] = \sum_{h \in \rows{S} \setminus \{i\}} (-1)^{k-1 + r_{\rows{S}^{-i}} (h) } v_{k, j'} [\rows{S}-i - h,\cols{S} -j -\ell  ]. \]
Thus
\begin{multline*} [\rows{S} \Delta \{i, i'\}, \cols{S} \Delta \{j,j'\} ] =\\ d_{i', j'} [\rows{S} -i, \cols{S} - j] - \sum_{h,\ell} (-1)^{ r_{\rows{S}^{-i}} (h) + r_{\cols{S}^{-j}}(\ell) } u_{i', \ell } v_{k, j'} [\rows{S}-i-h,\cols{S} -j -\ell ].
\end{multline*}
Similarly,
\begin{multline*}
	 [\rows{T} \Delta \{i, i'\}, \cols{T} \Delta \{j,j'\} ] =\\ c_{i, j} [\rows{T} -i', \cols{T} - j'] - \sum_{h',\ell'} (-1)^{r_{\rows{T}^{-i'}} (h') + r_{\cols{T}^{-j'}}(\ell') } v_{i, \ell' } u_{h', j} [\rows{T}-i' - h',\cols{T} -j' -\ell'  ] \end{multline*}
Now, consider $[\rows{S} \Delta \{i, i'\}, \cols{S} \Delta \{j,j'\} ] \times [\rows{T} \Delta \{i, i'\}, \cols{T} \Delta \{j,j'\} ]$ as a multivariate polynomial $p$ in variables $\vec u = \{u_{\cdot,\cdot}\},\vec v= \{v_{\cdot,\cdot}\}.$ For $s\in \set{0,1,2}$ let $p_s^{i,i', j, j'}$ denote the sum over monomials of $p$ which have degree $s$ in $\vec{u}$ and in $\vec{v}.$ We will omit the superscript when appropriate.

We further decompose $p_1$ into
\begin{align*}
    &p_1 = -(p_{1A} + p_{1B}) \\
    & p_{1A} = \sum_{h, \ell} (-1)^{ r_{\rows{S}^{-i}} (h) + r_{\cols{S}^{-j}}(\ell) } c_{i, j} [\rows{T} -i', \cols{T} - j'] u_{i', \ell } v_{h, j'} [\rows{S}-i-h,\cols{S} -j -\ell  ] \\
    &p_{1B}= \sum_{h', \ell'} (-1)^{r_{\rows{T}^{-i'}} (h') + r_{\cols{T}^{-j'}}(\ell') } d_{i', j'} [\rows{S} -i, \cols{S} - j]  v_{i, \ell' } u_{h', j} [\rows{T}-i' - h',\cols{T} -j' -\ell'  ] )
\end{align*}
\begin{claim} We have
\begin{equation} \label{eq:p1}
    \sum \delta^{\{i,i'\},\{j, j'\} }p_1^{i,i', j, j'} = 2 (k-1) (-1)^k\sum_{h, i'} (-1)^{h+i'}   [\rows{S}-h+i', \cols{S}] \cdot [\rows{T}-i'+h, \cols{T}]
\end{equation}
\end{claim}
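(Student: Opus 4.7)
The plan is to evaluate $Q_A := \sum_{i,i',j,j'} \delta^{\{i,i'\},\{j,j'\}}\,p_{1A}^{i,i', j, j'}$ directly, then argue by the $(S,T)$-symmetry that $\sum \delta\,p_{1B} = Q_A$, and finally assemble the answer using $p_1 = -(p_{1A}+p_{1B})$.

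View $p_{1A}$ as a polynomial of bidegree $(1,1)$ in the formal variables $\{u_{i',\ell}: i'\in\rows{T},\ell\in\cols{S}\}$ and $\{v_{h,j'}: h\in\rows{S},j'\in\cols{T}\}$. First I extract the coefficient of a fixed monomial $u_{i'_0,\ell_0}v_{h_0,j'_0}$ in $Q_A$: the only contributing terms of $p_{1A}^{i,i',j,j'}$ are those with $i' = i'_0$, $j' = j'_0$ and, in the inner sum, $h = h_0$, $\ell = \ell_0$. After factoring out $(-1)^{i'_0+j'_0}[\rows{T}-i'_0,\cols{T}-j'_0]$, this coefficient reduces to
\[
\sum_{\substack{i\in \rows{S}\setminus\{h_0\}\\ j\in \cols{S}\setminus\{\ell_0\}}} (-1)^{i+j+r_{\rows{S}^{-i}}(h_0)+r_{\cols{S}^{-j}}(\ell_0)}\,c_{i,j}\,[\rows{S}-i-h_0,\cols{S}-j-\ell_0].
\]

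The key algebraic step is the $(k-1)$-fold cofactor identity
\[
(k-1)\,[\rows{S}-h_0,\cols{S}-\ell_0] \;=\; \sum_{\substack{i\neq h_0\\ j\neq\ell_0}} (-1)^{r_{\rows{S}^{-h_0}}(i)+r_{\cols{S}^{-\ell_0}}(j)}\,c_{i,j}\,[\rows{S}-i-h_0,\cols{S}-j-\ell_0],
\]
obtained by Laplace-expanding $[\rows{S}-h_0,\cols{S}-\ell_0]$ along each row $i \neq h_0$ and summing the $k-1$ resulting identities. The two sign conventions differ by a controllable factor: the elementary parity identity $r_{L^{-a}}(b)+r_{L^{-b}}(a)\equiv a+b+1\pmod 2$ (both sides equal $r_L(a)+r_L(b)-1$) shows the exponents in the two formulas differ by $h_0+\ell_0$ modulo $2$. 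Substituting back, the coefficient of $u_{i'_0,\ell_0}v_{h_0,j'_0}$ in $Q_A$ becomes $(k-1)(-1)^{h_0+\ell_0+i'_0+j'_0}[\rows{S}-h_0,\cols{S}-\ell_0][\rows{T}-i'_0,\cols{T}-j'_0]$.

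By the $(S,T)$-symmetry of the setup---swapping $C\leftrightarrow D$ and exchanging fixed versus summed indices sends $p_{1A}$ to $p_{1B}$ while preserving the roles of the $U$ and $V$ blocks and their variables---the same value appears for $\sum\delta\,p_{1B}$, so $\sum\delta\,p_1 = -2Q_A$. To match the claim's right-hand side, I Laplace-expand $[\rows{S}-h+i',\cols{S}]$ along the inserted row $i'$ (at position $k$ in the sorted ordering) and $[\rows{T}-i'+h,\cols{T}]$ along the inserted row $h$ (at position $1$), which rewrites the RHS as a polynomial in the $u_{i',\ell}v_{h,j'}$ monomials with explicit coefficients to match against those of $-2Q_A$. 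The main technical obstacle is sign bookkeeping: Laplace signs such as $(-1)^{k+\ell}$ and $(-1)^{1+j'-k}$ depend on sorted positions of inserted rows/columns and must be carried through consistently together with the parity identity above; once these are aligned, the identity falls out of the cofactor expansion.
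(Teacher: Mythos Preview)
Your proposal is correct and follows essentially the same approach as the paper. Fixing the monomial $u_{i'_0,\ell_0}v_{h_0,j'_0}$ is exactly the paper's step of fixing $i',j',h,\ell$ and summing over $i,j$; both reduce the inner double sum via the same $(k-1)$-fold Laplace identity and the same parity relation $r_{L^{-a}}(b)+r_{L^{-b}}(a)=r_L(a)+r_L(b)-1$, and both then repackage the result using Laplace expansions of $[\rows{S}-h+i',\cols{S}]$ and $[\rows{T}-i'+h,\cols{T}]$, with the $p_{1B}$ contribution handled by the evident $(S,T)$-symmetry.
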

\begin{proof}
In $\sum_{i,j} \delta^{\{i,i'\},\{j, j'\}} p_{1A}^{i,i', j, j'}$ we consider the sum of all terms with the same $i',j', h, l$. Note that $r_{\cols{S}^{-j}}(\ell) + r_{\cols{S}}(j) = r_{\cols{S}}(\ell) +r_{\cols{S}^{-\ell}}(j) +1 \mod 2$. This is because, 

$r_{\cols{S}^{-j}}(\ell) = \begin{cases}r_{\cols{S}}(\ell) \text{ if $\ell < j $} \\ r_{\cols{S}}(\ell) -1 \text{ if $\ell> j $} \end{cases}$ and $r_{\cols{S}^{-\ell}}(j) = \begin{cases} r_{\cols{S}}(j)-1 \text{ if $\ell < j $} \\ r_{\cols{S}}(j) \text{ if $\ell> j $} \end{cases}.$

Similarly, $r_{\rows{S}^{-i}}(h) + r_{\rows{S}}(i) = r_{\rows{S}}(h) +r_{\rows{S}^{-h}}(i) +1 \mod 2$.

Thus, this sum is exactly,
\begin{align*}
%& (-1)^{i'+j'+k+\ell} 
&\delta^{\{i', h\}, \{j', \ell\} }
u_{i', \ell } v_{h, j'}   [\rows{T} -i', \cols{T} - j']  \sum_{i \neq h,j \neq \ell} (-1)^{ r_{\cols{S}^{-\ell}}(j) + r_{\rows{S}^{-h}}(i)} c_{i, j} [\rows{S}-i - h,\cols{S} -j -\ell  ]  \\
= &(k-1)\delta^{\{i', h\}, \{j', \ell\} } u_{i', \ell } v_{h, j'}   [\rows{T} -i', \cols{T} - j']  [\rows{S}-h, \cols{S} - \ell],
\end{align*}
Indeed, for each $i\in \rows{S}-h,$ expanding $[\rows{S}-h, \cols{S} - \ell]$ along row $i$ gives \[\sum_{j \neq \ell} (-1)^{ r_{\cols{S}^{-\ell}}(j) + r_{\rows{S}^{-h}}(i)} c_{i, j} [\rows{S}-i - h,\cols{S} -j -\ell  ].\] Taking sum over $i\in \rows{S}-h$ gives the above equality.

Thus 
\begin{align*}
&\sum \delta^{\{i,i'\},\{j,j'\}} p_{1A}^{i,i',j,j'} \\
&=  (k-1) \sum_{i', j', h, \ell } \delta^{\{i', h\}, \{j', \ell\} }  u_{i', \ell } v_{h, j'}   [\rows{T} -i', \cols{T} - j']  [\rows{S}-h, \cols{S} - \ell]   \\
&= -(k-1) (-1)^k\sum_{h, i'} \delta^{\{i',h\},\emptyset} \Big((\sum_{\ell} (-1)^{h+r_{\cols{S}}(\ell) } u_{i', \ell} [\rows{S}-h, \cols{S} - \ell])\times\\ & \qquad(\sum_{j' } (-1)^{1+r_{\cols{T}}(j')}  v_{h,j'}[\rows{T} -i', \cols{T} - j'] ) \Big)  \\
&=  -(k-1) (-1)^k\sum_{h, i'} \delta^{\{i',h\},\emptyset}  [\rows{S}-h+i', \cols{S}] [\rows{T}-i'+h, \cols{T}]
\end{align*}
Similarly,
$$\sum p_{1B} = -(k-1) (-1)^k\sum_{i, h'}\delta^{\{i',h\}}   [\rows{S}-i+h', \cols{S}] [\rows{T}-h'+i, \cols{T}] $$
\end{proof}
Next, we show 
\begin{claim}
\begin{equation} \label{eq:p2}
    \sum_{i,j, i', j'} \delta^{\{i,i'\},\{j, j'\}} p_2^{i, i',j, j'}   = 4 \sum_{i<h, i'<h'} [\rows{S}-i-h+i'+h', \cols{S}][\rows{T}-i'-h'+i+h, \cols{T}]
\end{equation}
\end{claim}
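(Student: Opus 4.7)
The plan is to mirror the structure of the proof of equation~(\ref{eq:p1}). First, substitute the explicit expansion of $p_2^{i,i',j,j'}$ as a 4-fold sum over $(h, \ell, h', \ell')$ and swap the order of summation to bring $(h, \ell, h', \ell')$ outside. Using the parity identity $r_{\rows{S}^{-i}}(h) + r_{\rows{S}}(i) \equiv r_{\rows{S}}(h) + r_{\rows{S}^{-h}}(i) + 1 \pmod{2}$ (and its three analogues for $\cols{S}, \rows{T}, \cols{T}$), rewrite the combined sign so that the outer contribution becomes $(-1)^{r_*(h) + r_*(\ell) + r_*(h') + r_*(\ell')}$, leaving an inner sign that depends only on the inner variables.

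Next, observe that the inner sum over $(i, j, i', j')$ factors as a product of two independent double sums, since $(i, j)$ only appear together with $v_{i, \ell'}$, $u_{h', j}$, and $[\rows{S}-i-h, \cols{S}-j-\ell]$, while $(i', j')$ only appear with $u_{i', \ell}$, $v_{h, j'}$, and $[\rows{T}-i'-h', \cols{T}-j'-\ell']$. Each of these double sums can be evaluated in closed form via a double Laplace expansion. Specifically, expanding the $k\times k$ determinant $[\rows{S}-h+h', \cols{S}-\ell+\ell']$ along its ``new'' row $h'$ and then along its ``new'' column $\ell'$ yields
\[ \sum_{i,j} (-1)^{r_{\rows{S}^{-h}}(i) + r_{\cols{S}^{-\ell}}(j)} u_{h', j} v_{i, \ell'} [\rows{S}-i-h, \cols{S}-j-\ell] = d_{h', \ell'} [\rows{S}-h, \cols{S}-\ell] - [\rows{S}-h+h', \cols{S}-\ell+\ell'], \]
and an analogous identity evaluates the $(i', j')$-sum as $c_{h, \ell} [\rows{T}-h', \cols{T}-\ell'] - [\rows{T}-h'+h, \cols{T}-\ell'+\ell]$.

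Multiplying these two closed forms transforms the LHS into a signed sum over $(h, \ell, h', \ell')$ of a product of two differences of determinants. To match this against the RHS, Laplace-expand $[\rows{S}-i-h+i'+h', \cols{S}]$ along its two ``new'' rows $i', h' \in \rows{T}$, producing $2\times 2$ $U$-minors times $(k-2)\times(k-2)$ $C$-minors; and $[\rows{T}-i'-h'+i+h, \cols{T}]$ along its two ``new'' rows $i, h \in \rows{S}$, producing $2\times 2$ $V$-minors times $(k-2)\times(k-2)$ $D$-minors. Then compare coefficients of each monomial of the form $u_{\alpha_1, \beta_1} u_{\alpha_2, \beta_2} v_{\gamma_1, \delta_1} v_{\gamma_2, \delta_2}$. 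Each such monomial arises on the LHS from exactly $4$ pairings (the $2$ ways to assign the two $u$-factors to the roles $(i', \ell)$ or $(h', j)$, and independently the $2$ ways for the two $v$-factors), while the explicit factor of $4$ on the RHS accounts for this multiplicity.

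The main obstacle will be the sign bookkeeping: verifying that all four LHS pairings contribute the same sign, and that this common sign matches the Laplace-expansion signs on the RHS. We expect this to follow from repeated application of the parity identity above, which relates pairs of reassigned indices differing by a transposition; the ``ordering'' degrees of freedom in the unordered monomial factors should cancel against the alternation in the $2\times 2$ minors.
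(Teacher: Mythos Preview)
Your plan is workable in principle, but it chooses the wrong factorization and then has to repair the damage by a brute-force monomial comparison. The paper's proof is structurally the same idea—factor the eight-fold sum into a product of two double sums and recognize each as a Laplace expansion—but it groups the variables differently, and that choice makes the proof essentially one line.

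Concretely: you fix the ``outer'' variables $(h,\ell,h',\ell')$ and sum over $(i,j,i',j')$. Because your inner $(i,j)$-sum carries one $u$-factor ($u_{h',j}$) and one $v$-factor ($v_{i,\ell'}$), the Laplace expansion it matches is that of a determinant with one new \emph{row} and one new \emph{column}, namely $[\rows{S}-h+h',\cols{S}-\ell+\ell']$. This is an $\mathcal{E}_1$-type minor, not the $\mathcal{E}_3$-type minor $[\rows{S}-i-h+i'+h',\cols{S}]$ appearing on the right-hand side. That is why you are forced into a second stage of Laplace-expanding the right-hand side and matching $uuvv$-monomials by hand (and why your nice closed-form identity ends up unused).

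The paper instead fixes the four \emph{row} indices $(i,i',h,h')$ and sums over the four \emph{column} indices $(j,\ell,j',\ell')$. This inner sum factors as $X_{i,i',h,h'}\cdot Y_{i,i',h,h'}$, where
\[
X_{i,i',h,h'}=\sum_{j,\ell}(-1)^{r_{\cols{S}^{-j}}(\ell)+r_{\cols{S}}(j)}\,u_{i',\ell}u_{h',j}\,[\rows{S}-i-h,\cols{S}-j-\ell]
\]
carries two $u$-factors and no $v$-factors. Expanding $[\rows{S}-i-h+i'+h',\cols{S}]$ along its two new rows $h'$ then $i'$ gives exactly this sum (up to a sign $(-1)^{\mathbbm{1}\{i'<h'\}}$), and the analogous identity holds for $Y$. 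One sign check using $\mathbbm{1}\{i<h\}+r_{\rows{S}}(i)+r_{\rows{S}^{-i}}(h)\equiv 0\pmod 2$ then finishes the proof, with the factor of $4$ coming from dropping the ordering constraints $i<h$, $i'<h'$.

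So: your outline would eventually succeed, but the monomial-matching you anticipate as ``the main obstacle'' is entirely avoidable. Regroup by row indices rather than by inner/outer loop variables, and the target determinants fall out directly.
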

\begin{proof}
Recall that
	\begin{align*}
	p_2^{i,i',j,j'} = \sum_{h, \ell, h', \ell'} (-1)^{ \omega(h,h',\ell,\ell') } u_{i',\ell} v_{i,\ell'} v_{h, j'} u_{h', j}[\rows{S}\setminus\{i, h\},\cols{S} \setminus\{j ,\ell\}  ]  [\rows{T}\setminus\{i', h'\},\cols{T} \setminus \{j',\ell'\}  ],
	\end{align*}
    where $\omega(h,h',\ell,\ell')  = r_{\rows{S}^{-i}} (h) + r_{\cols{S}^{-j}}(\ell)  + r_{\rows{T}^{-i'}} (h') + r_{\cols{T}^{-j'}}(\ell') .$ Taking sum and rearranging terms, we have
    \begin{align*}
    &\sum_{i,j, i', j'} \delta^{\{i,i'\},\{j, j'\}} p_2^{i, i',j, j'} 
    = \sum_{i,i',h,h'} (-1)^{r_{\rows{S}}(i)+r_{\rows{T}}(i')+r_{\rows{S}^{-i}} (h)  + r_{\rows{T}^{-i'}} (h')  } X_{i,i',h,h'} \times Y_{i,i',h,h'} 
    \end{align*}
    where
    \begin{align*}
    &X_{i,i',h,h'}  =\sum_{j, \ell} (-1)^{r_{\cols{S}^{-j}} (\ell) + r_{\cols{S}}(j)} u_{i', \ell} u_{h',j} [\rows{S}-i-h, \cols{S} - j - \ell] \\
    & Y_{i,i',h,h'}  =\sum_{j', \ell'} (-1)^{r_{\cols{T}^{-j'}} (\ell') + r_{\cols{T}}(j')} v_{i,\ell'} v_{h, j'} [\rows{T}-i'-h', \cols{T} - j' - \ell'] \\
    %\intertext{we rewrite $X$ as}
    \end{align*}
	Expanding $ [\rows{S}-i-h+i'+h', \cols{S}]$ along row $h'$ then $i'$, we get%column $j$ then along column $\ell$, we get}
	\begin{align*}
    [\rows{S}\Delta\{i,i',h,h'\}, \cols{S}] &= \sum_{j}  (-1)^{r_{\cols{S}}(j)+ k+ \1\{i'> h'\}} u_{h', j} [\rows{S}-i-h+i', \cols{S} - j]\\
    &= \sum_{j,\ell} (-1)^{r_{\cols{S}}(j)+ k+ \1\{i'> h'\}} (-1)^{r_{\cols{S}^{-j}} (\ell) + k-1}
     u_{i', \ell} u_{h',j} [\rows{S}-i-h, \cols{S} - j - \ell]\\
    &= (-1)^{\1\{i'< h'\}} X_{i,i',h,h'} \\
    \end{align*}
    Similarly, $Y_{i,i',h,h'}  = (-1)^{\1\{i< h\}} [\rows{T}\Delta\{i,i',h,h'\}, \cols{T}]$. Note that $\1\{i< h\} +  r_{\rows{S}}(i)+r_{\rows{S}^{-i}} (h) \equiv 0\pmod{2}$. A similar equation holds for $i',h'$. Substituting back in we get the desired equation%$\bbm\{i'< h'\} +  r_S(i)+r_{S^{-i}} (k) \equiv 0\pmod{2}  $ Thus }
    \[
    \sum_{i,j, i', j'} \delta^{\{i,i'\},\{j, j'\}} p_2^{i, i',j, j'}  =4\sum_{i<h,i'<h'}  [\rows{S}\Delta\{i,i',h,h'\}, \cols{S}] [\rows{T}\Delta\{i,i',h,h'\}, \cols{T}] \]
\end{proof} % Interrupt
Lastly, we compute $\sum_{i,j,i',j'} \delta^{\{i,i'\},\{j,j'\}} p_0^{i,j, i', j'}.$ By rearranging terms and using the determinant expansion for  $[\rows{S}, \cols{S}]$ and $[\rows{T}, \cols{T}]$, we get:
\begin{equation}\label{eq:p0}
    \begin{split} 
     &\sum_{i,j,i',j'} \delta^{\{i,i'\},\{j,j'\}} p_0^{i,j, i', j'}\\
        &= \sum \delta^{\{i,i'\},\{j,j'\}} (d_{i', j'} [\rows{S} -i, \cols{S} - j]  c_{i, j} [\rows{T} -i', \cols{T} - j'])\\
    &=(\sum_{i,j} (-1)^{r_{\rows{S}}(i)+r_{\cols{S}}(j)} c_{i, j}  [\rows{S} -i, \cols{S} - j]  )(\sum_{i', j'} (-1)^{r_{\rows{T}}(i')+r_{\cols{T}}(j')} d_{i', j'} [\rows{T} -i', \cols{T} - j'])\\
    &= (t [\rows{S}, \cols{S}])  (t [\rows{T}, \cols{T}])
    \end{split}
\end{equation}

Substituting equations \cref{eq:p0,eq:p1,eq:p2} back into $s_1$ we get \cref{eq:pluckerRelation}.
\end{proof}

Now consider the general case when $\rows{S}, \rows{T}$ and $\cols{S}, \cols{T}$ are not necessarily disjoint. We will create a new larger matrix $A$ with a new set of row and column indices. In particular we create new disjoint subsets $\rows{S}^*, \rows{T}^*$ and $\cols{S}^*, \cols{T}^*$ with copied versions of common rows and columns. We use \cref{lemma:disjointCase} for $\rows{S}^*, \rows{T}^*, \cols{S}^*, \cols{T}^* $, then argue that any nonzero terms in \cref{eq:pluckerRelation} %(with respect to ) 
must be equal to $[\rows{S} \Delta \rows{U}, \cols{S} \Delta \cols{U}][\rows{T} \Delta \rows{U}, \cols{T} \Delta \cols{U}]$ for some $U \subseteq \rows{S} \Delta \rows{T}, \cols{U} \subseteq \cols{S} \Delta \cols{T}$.

Let $r:=\card{\rows{S} \cap \rows{T}}, c:= \card{\cols{S} \cap \cols{T}}$. W.l.o.g., we can assume 
\begin{equation} \label{eq:positionGeneralCase}
\begin{split}
     &\rows{S} = \set{1, \cdots, r, r+1, \cdots, k}, \rows{T} = \set{1, \cdots, r, k+(r+1), \cdots, 2k},\\ 
     &\cols{S} =  \set{1, \cdots, c, c+1, \cdots, k}, \cols{T} =  \set{1, \cdots, c, k+(c+1), \cdots, 2k}. 
\end{split}
\end{equation}
For $i \in [r]$, set row $k+i$ to be identical to row $i$. For $j\in [c]$,  set column $k+j$ to be identical to row $j$.

Let $\rows{S}^*: = \rows{S}, \cols{S}^*:= \cols{S} , \rows{T}^{*} = \{k+1, \cdots, 2k\}, \cols{T}^* = \{k+1, \cdots, 2k\}.$ Clearly, %$[T, \hat{T}] = [T_*, \hat{T}_*]$, 
$\rows{S}^* \cap \rows{T}^* = \cols{S}^* \cap\cols{T}^* = \emptyset .$

Let $\Omega^* = (S^*, T^*)$ and $s_i^*: = s_i(\Omega^*)$ as in \cref{eq:siDisjoint}.
We first prove the following claims on the structure of nonzero terms in $s_1^*, s_2^*, s_3^*.$
\begin{claim}\label{clm:structureRow}
Consider $\rows{U}\subseteq \rows{S}^*,  \rows{U}'\subseteq \rows{T}^*$ of the same cardinality. Let $\rows{\mU} = \rows{U} \cup \rows{U}'.$ Consider sets $V, W $ of the same cardinality $t$ that partition $\cols{S}^* \cup\cols{T}^*.$ %where $\abs{V} = \abs{W} = t$.

If there exists $i \in \rows{U} \cap [r]$ such that $k+i \not \in \rows{U}'$ then $[\rows{S}^* \Delta \rows{\mU} , V] [\rows{T}^* \Delta \rows{\mU}, W] = 0.$

If there exists $k+i \in \rows{U}' \cap \{k+1, \cdots ,k+r\}$ such that $i \not \in \rows{U}$ then $[\rows{S}^* \Delta \rows{\mU} , V] [\rows{T}^* \Delta \rows{\mU}, W] = 0.$
\end{claim}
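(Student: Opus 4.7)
The plan rests on a single observation: by the construction in \cref{eq:positionGeneralCase}, row $i$ coincides with row $k+i$ for every $i\in [r]$, so any $k\times k$ submatrix of $A$ that contains \emph{both} of these rows is singular. I will show that in each of the two hypothesized cases, exactly one of the two factors $[\rows{S}^* \Delta \rows{\mU}, V]$ or $[\rows{T}^* \Delta \rows{\mU}, W]$ picks up such a duplication, which kills the product.

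First I would rewrite the two symmetric differences in a transparent form. Because $\rows{S}^* = [k]$ and $\rows{T}^* = \set{k+1,\dots,2k}$ are disjoint, and $\rows{U}\subseteq \rows{S}^*$, $\rows{U}'\subseteq \rows{T}^*$, the set $\rows{\mU} = \rows{U}\cup \rows{U}'$ meets $\rows{S}^*$ exactly in $\rows{U}$ and meets $\rows{T}^*$ exactly in $\rows{U}'$. Consequently,
\[ \rows{S}^* \Delta \rows{\mU} = (\rows{S}^* \setminus \rows{U}) \cup \rows{U}', \qquad \rows{T}^* \Delta \rows{\mU} = (\rows{T}^* \setminus \rows{U}') \cup \rows{U}. \]

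With these formulas in hand, both cases are immediate. For the first case, $i \in \rows{U} \cap [r]$ puts row $i$ into $\rows{T}^* \Delta \rows{\mU}$ (through the $\rows{U}$ summand), while $k+i \notin \rows{U}'$ means row $k+i$ is not removed from $\rows{T}^*$ and hence also lies in $\rows{T}^* \Delta \rows{\mU}$. Since rows $i$ and $k+i$ are identical, $A_{\rows{T}^* \Delta \rows{\mU}, W}$ has a repeated row, so $[\rows{T}^* \Delta \rows{\mU}, W] = 0$. The second case is symmetric: $k+i \in \rows{U}'$ together with $i\notin \rows{U}$ forces both rows $i$ and $k+i$ to sit inside $\rows{S}^* \Delta \rows{\mU}$, giving $[\rows{S}^* \Delta \rows{\mU}, V] = 0$.

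There is no real obstacle in this step beyond careful set-theoretic bookkeeping; the content of the claim is just the intuitive statement that a duplicated pair $\set{i, k+i}$ cannot be ``undone'' by $\rows{\mU}$ unless both copies are swapped together, so exchanging only one of them leaves the duplication visible in one of the two submatrices. The same reasoning, applied to columns in place of rows, will be used verbatim for the analogous column-side claim needed later.
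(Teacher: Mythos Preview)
Your argument is correct and is essentially the paper's own proof: both show that in the first case $\rows{T}^*\Delta\rows{\mU}$ contains the identical rows $i$ and $k+i$, forcing $[\rows{T}^*\Delta\rows{\mU},W]=0$, and treat the second case by symmetry. You simply spell out the symmetric-difference computation $(\rows{T}^*\setminus\rows{U}')\cup\rows{U}$ more explicitly than the paper does.
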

\begin{claim}\label{clm:structureColumn}
Consider $\cols{U}\subseteq \cols{S}^*,  \cols{U}'\subseteq \cols{T}^*$ of the same cardinality. Let $\cols{\mU} = \cols{U} \cup \cols{U}'.$ Consider sets $V, W $ of the same cardinality $t$ that partition $\rows{S}^* \cup \rows{T}^*.$ %where $\abs{V} = \abs{W} = t$.

If there exists $i \in \cols{U} \cap [c]$ such that $k+i \not \in \cols{U}'$ then $[V, \cols{S}^* \Delta \cols{\mU} ] [W, \cols{T}^* \Delta \cols{\mU}] = 0.$

If there exists $k+i \in \cols{U}' \cap \{k+1, \cdots ,k+c\}$ such that $i \not \in \cols{U}$ then $[V,\cols{S}^* \Delta \cols{\mU} ] [W, \cols{T}^* \Delta \cols{\mU}] = 0.$
\end{claim}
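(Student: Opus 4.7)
The plan is to exploit the key feature of the augmented matrix $A$: for each $i\in [r]$, row $k+i$ is a verbatim copy of row $i$, and for each $j\in [c]$, column $k+j$ is a verbatim copy of column $j$. Therefore any submatrix of $A$ whose row set contains both indices of such a duplicated pair, or whose column set does, has determinant zero. Both claims will follow by tracking how the symmetric-difference operation interacts with the presence of these duplicated pairs.

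For the first statement of \cref{clm:structureRow}, I would argue as follows. Suppose $i\in \rows{U}\cap [r]$ with $k+i\notin \rows{U}'$. Since $i\in \rows{S}^*$ and $i\notin \rows{T}^*$, unpacking the two symmetric differences shows that $i$ is removed from $\rows{S}^*\Delta \rows{\mU}$ but added to $\rows{T}^*\Delta \rows{\mU}$, while $k+i\in \rows{T}^*$ is untouched by $\rows{\mU}$ and so remains in $\rows{T}^*\Delta \rows{\mU}$. Consequently $\rows{T}^*\Delta \rows{\mU}$ contains both $i$ and $k+i$, and the corresponding row-submatrix of $A$ has two identical rows; hence $[\rows{T}^*\Delta \rows{\mU}, W]=0$, which kills the product. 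The second statement is entirely symmetric: if $k+i\in \rows{U}'$ but $i\notin \rows{U}$, then $k+i$ is removed from $\rows{T}^*$ and added to $\rows{S}^*$, while $i$ is left untouched, so both $i$ and $k+i$ end up in $\rows{S}^*\Delta \rows{\mU}$, forcing $[\rows{S}^*\Delta \rows{\mU}, V]=0$.

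\cref{clm:structureColumn} will then follow by the identical argument applied to columns instead of rows, using the duplication between column $j$ and column $k+j$ for $j\in [c]$. In each of the two cases, one checks that exactly one of the factors $[V,\cols{S}^*\Delta \cols{\mU}]$, $[W,\cols{T}^*\Delta \cols{\mU}]$ ends up containing both copies of a duplicated column, and so vanishes.

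The only real work in either claim is the case-by-case bookkeeping of the symmetric-difference manipulation, and I do not expect a genuine technical obstacle. The conceptual content is that any ``inconsistent'' exchange term in the disjoint-case identity from \cref{lemma:disjointCase} automatically vanishes once applied to the duplicated matrix $A$, which is precisely what is needed later in \cref{subsec:plucker} to pull the Plücker identity back from the disjoint setup $(\rows{S}^*,\rows{T}^*,\cols{S}^*,\cols{T}^*)$ to the original general setup $(\rows{S},\rows{T},\cols{S},\cols{T})$.
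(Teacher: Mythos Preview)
Your proposal is correct and follows exactly the same approach as the paper: the paper proves \cref{clm:structureRow} by observing that the hypotheses force one of $\rows{S}^*\Delta\rows{\mU}$ or $\rows{T}^*\Delta\rows{\mU}$ to contain both $i$ and $k+i$, hence two identical rows, and then declares \cref{clm:structureColumn} to be ``similar.'' Your writeup is in fact more detailed than the paper's, spelling out both statements of the row claim and the passage to columns, but the underlying idea is identical.
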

We prove \cref{clm:structureRow}. The argument for \cref{clm:structureColumn} is similar.
\begin{proof}[Proof of \cref{clm:structureRow}]
We prove the first statement. The second one follows by a similar argument, since the role of $\rows{U}, \rows{U}'$ are symmetric.

Suppose there exists $i \in \rows{U} \cap [c]$ such that $k+i \not \in \rows{U}'.$ Then $\rows{T}^* \Delta \rows{\mU}$ contains both rows $i$ and $k+i,$ which are identical by our construction, thus $[\rows{T}^* \Delta \rows{\mU}, W] = 0.$
\end{proof}
\begin{lemma} \label{lemma:generalCase}

Consider $S=(\rows{S},\cols{S}), T=(\rows{T},\cols{T})$ as in \cref{eq:positionGeneralCase}.

Let $\Omega := (S, T), r := \card{\rows{S} \cap \rows{T}}, c := \card{\cols{S} \cap \cols{T}}.$ 

Define
\begin{equation}
    \begin{split}
        &s_1 (\Omega) = (-1)^{r+c}\sum_{i, j, i', j' } \delta^{\{i,i'\},\{j,j'\}} [\rows{S} \Delta \{i, i'\}, \cols{S} \Delta \{j,j'\} ] \times [\rows{T} \Delta \{i, i'\}, \cols{T} \Delta \{j,j'\} ]\\
        &s_2(\Omega) = (-1)^{k-r}\sum_{i, i'} \delta^{\{i,i'\}, \emptyset}   [\rows{S}\Delta\{i,i'\}, \cols{S}] [\rows{T}\Delta\{i,i'\}, \cols{T}]\\
        &\hat{s}_2(\Omega) = (-1)^{k-c} \sum_{j, j'} \delta^{ \emptyset,\{j ,j'\}}   [\rows{S}, \cols{S} \Delta \{j, j'\} ] [\rows{T}, \cols{T} \Delta \{j, j'\}]\\
&s_3(\Omega) = \sum_{i<h, i'<h'} \delta^{\{i,h,i',h'\},\emptyset}[\rows{S}\Delta\{i,h, i', h'\}, \cols{S}][\rows{T}\Delta \{i,h, i', h'\}, \cols{T}] \\
 &\hat{s}_3(\Omega) = \sum_{j < \ell, j' < \ell'} \delta^{ \emptyset,\{j, \ell ,j', \ell'\}}   [\rows{S}, \cols{S} \Delta \{j, \ell, j', \ell'\} ] [\rows{T}, \cols{T} \Delta \{j,\ell, j', \ell'\}],
    \end{split}
\end{equation} % Interrupt
where in above summations, $j,\ell \in \cols{S} \setminus \cols{T}; j',\ell' \in \cols{T} \setminus \cols{S}; i,h \in \rows{S}\setminus \rows{T}; i',h' \in \rows{T}  \setminus \rows{S}. $

Let $s_i: = s_i(\Omega).$ we have the following relations.
\begin{equation}
     \begin{split}
         (k^2 - 2(k-1) r + 4 \binom{r}{2} - rc) [\rows{S},\cols{S}][\rows{T}, \cols{T}] &=  s_1- r \hat{s}_2  -(  2(k-1) +c -4r ) s_2 - 4 s_3\\
         (k^2 - 2 (k-1)c + 4 \binom{c}{2} - rc) [\rows{S},\cols{S}][\rows{T}, \cols{T}]& =  s_1- c s_2  -(2(k-1) + r-4c) \hat{s}_2 - 4 \hat{s}_3\\
     \end{split}
\end{equation}

Summing the two equations above, we get:
\begin{multline} \label{eq:generalCase}
    ((k-r)^2 + (k-c)^2+(r-c)^2) [\rows{S},\cols{S}][\rows{T}, \cols{T}] =\\ 2 s_1 - 2(k-1+r -2c) \hat{s}_2  - 2(k-1+c -2r) s_2 - 4(s_3 +\hat{s}_3) 
\end{multline}
\end{lemma}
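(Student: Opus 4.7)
The plan is to prove \cref{lemma:generalCase} by reducing to the disjoint case \cref{lemma:disjointCase} via a duplication trick, then carefully accounting for which ``exchanges'' in the enlarged setting actually correspond to genuine exchanges in the original, versus which ones trivially swap a shared row/column with its copy.

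First, I would construct the augmented matrix as set up in the text: for each $i\in[r]$ duplicate row $i$ as row $k+i$, and for each $j\in[c]$ duplicate column $j$ as column $k+j$, so that $S^*=(\rows{S},\cols{S})$ and $T^*=(\{k+1,\ldots,2k\},\{k+1,\ldots,2k\})$ are disjoint in both row and column indices. Because duplicated rows/columns are literally identical, $[\rows{S}^*,\cols{S}^*]=[\rows{S},\cols{S}]$ and $[\rows{T}^*,\cols{T}^*]=\pm [\rows{T},\cols{T}]$, and the same relationship (up to a determined sign) holds for any $[\rows{S}^*\Delta\rows{\mU},\cols{S}^*\Delta\cols{\mU}]$ obtained by a ``fake'' exchange that only moves an index between its two copies. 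Applying \cref{lemma:disjointCase} to $\Omega^*=(S^*,T^*)$ gives the master identity $s_1^*-2(k-1)s_2^*-4s_3^*=k^2[\rows{S}^*,\cols{S}^*][\rows{T}^*,\cols{T}^*]$, which is the starting point.

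Next, I would decompose each $s_i^*$ according to the type of exchange. By \cref{clm:structureRow}, a summand of $s_i^*$ with row-exchange set $\rows{\mU}$ vanishes unless, for every shared index $i\in[r]$, the pair $\{i,k+i\}$ is either entirely contained in $\rows{\mU}$ or entirely disjoint from it; analogously for columns via \cref{clm:structureColumn}. So each nonzero $s_1^*$ contribution falls into one of four cases, indexed by whether the row pair $\{i,i'\}$ is genuine ($i\in\rows{S}\setminus\rows{T}$, $i'\in\rows{T}\setminus\rows{S}$) or fake (a copy pair $\{i,k+i\}$ for $i\in[r]$), and similarly for $\{j,j'\}$. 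Fake pairs leave the relevant $S^*$-determinant equal to $[\rows{S},\cols{S}]$ (up to sign) and similarly for $T^*$, while genuine pairs reproduce terms of the original $s_1,s_2,\hat s_2,$ or $[\rows{S},\cols{S}][\rows{T},\cols{T}]$. I would do the same case analysis for $s_2^*$ (which has no column exchange, so it splits into two cases depending on whether the row pair is genuine, giving contributions to $s_2$ and to $r\cdot[\rows{S},\cols{S}][\rows{T},\cols{T}]$) and for $s_3^*$ (two row pairs, so four cases, giving $s_3$, row-genuine/row-fake mixed terms proportional to $s_2$, and both-fake terms proportional to $[\rows{S},\cols{S}][\rows{T},\cols{T}]$). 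The main obstacle is this bookkeeping: I must verify, for each case, that the $\delta$-sign factors in the enlarged setting agree with the $\delta$ factors in the target $s_i,\hat s_i$ (together with the explicit $(-1)^{r+c},(-1)^{k-r},(-1)^{k-c}$ prefactors chosen in the lemma statement), and I must count multiplicities correctly, e.g., that fake column pairs in $s_1^*$ appear $c$ times (one per shared column) with row-genuine pairs to produce $c\cdot s_2$, that genuine column pairs with fake row pairs produce $r\cdot \hat s_2$, that $s_3^*$ contributes $s_3$, $2r\cdot s_2$ (choice of which genuine row pair comes with the forced fake), and $\binom{r}{2}[\rows{S},\cols{S}][\rows{T},\cols{T}]$ from all-fake exchanges, and similarly for $s_2^*\mapsto s_2+r[\rows{S},\cols{S}][\rows{T},\cols{T}]$. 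Collecting all of these and moving the $[\rows{S},\cols{S}][\rows{T},\cols{T}]$ terms to the left-hand side yields the first displayed identity with coefficient $k^2-2(k-1)r+4\binom{r}{2}-rc$.

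For the second identity, I would run the same argument with the roles of rows and columns swapped (duplicate columns play the role previously played by duplicated rows, so that $\hat s_2,\hat s_3$ take the place of $s_2,s_3$), producing the symmetric relation with $r\leftrightarrow c$ and $s\leftrightarrow\hat s$. Finally, \cref{eq:generalCase} follows by adding the two identities, noting $(k^2-2(k-1)r+4\binom{r}{2}-rc)+(k^2-2(k-1)c+4\binom{c}{2}-rc)=(k-r)^2+(k-c)^2+(r-c)^2$, and combining the right-hand sides. I expect the bulk of the work to be the sign/multiplicity verification in the case analysis; once that is done, the algebraic simplification is straightforward.
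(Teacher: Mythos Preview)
Your plan is exactly the paper's approach: duplicate shared rows/columns to form disjoint $S^*,T^*$, apply \cref{lemma:disjointCase}, use \cref{clm:structureRow,clm:structureColumn} to kill the ``illegal'' terms, and then sort the surviving summands of each $s_i^*$ into genuine-exchange and copy-pair contributions, after which the second identity follows by row/column symmetry and \cref{eq:generalCase} by addition. Two of your illustrative coefficients are off and worth correcting when you carry out the bookkeeping: the paper obtains $s_2^*=s_2-r[\rows{S},\cols{S}][\rows{T},\cols{T}]$ (not $+r$), and in $s_3^*$ the mixed case (one copy pair, one genuine row pair) contributes $-r\,s_2$ rather than $2r\,s_2$, since with the ordering $i<h$, $i'<h'$ the copy index is pinned to the smaller position on each side and there is no factor of $2$.
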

\begin{proof}
We prove the first statement. The second one can be obtained by switching the role of columns and rows. % third equation is the sum of the first two.
Consider $s_1^*$.

Let $X_{i,i',j,j'} := \delta^{\{i,i'\},\{j,j'\}} [\rows{S}^* \Delta \{i, i'\}, \cols{S}^* \Delta \{j,j'\} ] [\rows{T}^* \Delta \{i, i'\}, \cols{T}^* \Delta \{j,j'\} ].$
By \cref{clm:structureRow} and \cref{clm:structureColumn}, any nonzero term $X_{i,i',j,j'}$ in $s_1^*$ must belong to one of the following cases:
% then either
\begin{enumerate}
\item $i\in [r], i' = k+i, j\not \in [c], j' \not \in \{k+1, \cdots, k+c\}:$ 
Note that $j \in \cols{S} \setminus \cols{T}, j'\in \cols{T} \setminus \cols{S} .$ 

Obviously $\delta^{\{i,i'\},\{j,j'\}}=\delta^{\emptyset, \{j,j'\}} .$ 

Since rows $k+i$ and $i$ are identical,
\begin{align*}
  [\rows{S}^* \Delta \{i, i'\}, \cols{S}^* \Delta \{j,j'\} ] &= (-1)^{k+i}  [\rows{S}^*, \cols{S}^* \Delta \{j,j'\}] = (-1)^{k+i}  [\rows{S}, \cols{S} \Delta \{j,j'\}]\\ %=(-1)^{i+j}  [S, \hat{S}]   \\
  [\rows{T}^* \Delta \{i, i'\}, \cols{T}^* \Delta \{j,j'\} ] &=(-1)^{1+i}  [\rows{T}^*, \cols{T}^* \Delta \{j,j'\} ] = (-1)^{1+i+c}  [\rows{T}, \cols{T} \Delta \{j,j'\} ]%= (-1)^{1_}[T, \hat{T}]\\
 % \intertext{thus}
\end{align*}
thus
\[X_{i,i',j,j'} = (-1)^{k-c+1} \delta^{\emptyset, \{j,j'\}} [\rows{S}, \cols{S} \Delta \{j,j'\} ] [\rows{T}, \cols{T} \{j,j'\}].\]
     \item $j\in [c], j' = k+j, i\not \in [r], i' \not \in \{k+1, \cdots, k+r\}:$ similarly, 
     \[X_{i,i',j,j'}= (-1)^{k-r+1} \delta^{\{i,i'\},\emptyset} [\rows{S} \Delta \{i, i'\}, \cols{S}] [\rows{T} \Delta \{i, i'\}, \cols{T}].\]
    \item $i\in [r], j \in [c], i' = k+i, j' = k+j$: 
    $X_{i,i',j,j'} =   [\rows{S}, \cols{S}] [\rows{T}, \cols{T}].$
     \item $i \in \rows{S}\setminus \rows{T}, i' \in \rows{T}\setminus \rows{S}, j \in j \in \cols{S} \setminus \cols{T}, j'\in \cols{T} \setminus \cols{S} ,$ then 
     \begin{align*}
        X_{i,i',j,j'} &= \delta^{\{i,i'\},\{j,j'\}}  [\rows{S}^* \Delta \{i, i'\}, \cols{S}^* \Delta \{j,j'\} ] [\rows{T}^* \Delta \{i, i'\}, \cols{T}^* \Delta \{j,j'\} ] \\
        &= (-1)^{r+c}\delta^{\{i,i'\},\{j,j'\}}  [\rows{S} \Delta \{i, i'\}, \cols{S} \Delta \{j,j'\} ] [\rows{T} \Delta \{i, i'\}, \cols{T} \Delta \{j,j'\} ]
     \end{align*}
\end{enumerate}
We can rewrite
\begin{equation*}
    s_1^* = s_1 + rc [\rows{S},\cols{S}][\rows{T}, \cols{T}] - (r\, \hat{s}_2 + c\, s_2) 
\end{equation*}
By a similar argument
\begin{align*}
    &s_2^* = s_2 - r [\rows{S},\cols{S}][\rows{T}, \cols{T}]\\
    &s_3^* = s_3 + \binom{r}{2} [\rows{S},\cols{S}][\rows{T}, \cols{T}] - r s_2
\end{align*}
Substitute into \cref{eq:pluckerRelation} we get:
\begin{align*}
    (k^2 -2 (k-1) r + 4 \binom{r}{2} - rc) [\rows{S},\cols{S}][\rows{T}, \cols{T}] =  s_1- r \hat{s}_2  -( 2(k-1) +c-4r ) s_2 - 4 s_3
\end{align*}
\end{proof}

\subsection{From the Extended Pl\"ucker Relations to Exchange}
Armed with the extended Pl\"ucker relations, we are now ready to prove \cref{thm:exchange-property}.
\begin{proof}[Proof of \cref{thm:exchange-property}]
We can permute the rows and columns of $A$ so that $\rows{S}, \cols{S}, \rows{T}, \cols{T}$ are as in \cref{eq:positionGeneralCase}, while preserving the absolute value of determinant of minors. W.l.o.g., we can assume the permutation has already been applied; thus \cref{eq:generalCase} holds.

Let $r = \card{\rows{S}\cap \rows{T}}, c = \card{\cols{S} \cap \cols{T}}.$

With $\Omega=(S, T)$, we define 
\begin{equation} \label{eq:exchangeBase}
    \begin{split}
    \mathcal{E}_1^{\Omega}: &= \set*{(\{i,i'\},\{j,j'\}) \given i\in \rows{S}\setminus \rows{T}, i' \in  \rows{T}  \setminus \rows{S} , j  \in \cols{S} \setminus \cols{T}, j'\in \cols{T} \setminus \cols{S} }\\
    \mathcal{E}_2^{\Omega}: &= \set*{(\{i,i'\},\emptyset) \given i\in \rows{S}\setminus \rows{T}, i' \in \rows{T}  \setminus \rows{S}  }\\
    \hat{\mathcal{E}}_2^{\Omega}: &= \set*{(\emptyset,\{j,j'\}) \given   j  \in \cols{S} \setminus \cols{T}, j'\in \cols{T} \setminus \cols{S} }\\
    \mathcal{E}_3^{\Omega}: &= \set*{(\{i,h,i',h'\},\emptyset) \given i,h\in \rows{S}\setminus \rows{T}; i',h' \in  \rows{T}  \setminus \rows{S}  }\\
    \hat{\mathcal{E}}_3^{\Omega}: &= \set*{(\emptyset,\{j, \ell, j', \ell'\}) \given   j, \ell  \in \cols{S} \setminus \cols{T}; j', \ell'\in \cols{T} \setminus \cols{S} }\\
    \end{split}
\end{equation}
then $\mathcal{E}(S, T) = \mathcal{E}_1^{\Omega} \cup \mathcal{E}_2^{\Omega}\cup  \hat{\mathcal{E}}_2^{\Omega} \cup \mathcal{E}_3^{\Omega}\cup \hat{\mathcal{E}}_3^{\Omega},$ where $\mathcal{E}(\Omega)$ is as defined in \cref{def:exchanges}. % Interrupt
            
Note that $\abs{\mathcal{E}_1} = (k-r)^2 (k-c)^2 , \abs{\mathcal{E}_2} = (k-r)^2, \abs{\hat{\mathcal{E}}_2} = (k-c)^2 , \abs{\mathcal{E}_3} = \binom{k-r}{2}^2, \abs{\hat{\mathcal{E}}_3} = \binom{k-c}{2}^2 .$

Let $\gamma: = \max\set*{  \abs{\det(A_{S\Delta U})} \cdot \abs{\det(A_{T\Delta U})}\given U \in \mathcal{E}(S, T)}.$
By the triangle inequality, $\abs{s_1} \leq \card{\mathcal{E}_1} \gamma.$ A similar inequality holds for $s_2, \hat{s}_2, s_3, \hat{s}_3.$

Consider \cref{eq:generalCase}. Let $M := (k-r)^2 + (k-c)^2 + (r-c)^2$. By the triangle inequality and the above observation,
\begin{align*}
    &M\cdot [\rows{S}, \cols{S}]\cdot [\rows{T}, \cols{T}]\\
    &\leq \gamma(2\abs{\mathcal{E}_1} +  \abs{k-1+c-2r} \cdot \abs{\mathcal{E}_2} +  \abs{k-1+r-2c}\cdot \abs{\hat{\mathcal{E}}_2} + 4(\abs{\mathcal{E}_3} + \abs{\hat{\mathcal{E}}_3} ))\\
    &\leq \gamma\left(((k-r)^2 + (k-c)^2)^2 + 8k ((k-r)^2 + (k-c)^2) \right)\\
    &\leq M(2 k^2 + 8k).
\end{align*}
Since $S\neq T$ so $M > 0$. Dividing both sides by $M$ gives the desired inequality.
\end{proof}

\section{A Crude Approximation Algorithm}\label{sec:crude}

In this section we describe a crude approximation algorithm that can be used to provide the starting point for \cref{alg:localsearch}. We will formally prove \cref{lem:crude}. Our strategy is to appeal to prior results on simpler variants of determinant maximization. Specifically we use the following result of \textcite{Nik15}:\footnote{We remark that the approximation factor of $2^{O(k)}$ is not very important, and one can use simpler and cruder algorithms, such as \cite{approxLargestSimplex}, instead of \cite{Nik15}.}
\begin{theorem}[\cite{Nik15}]\label{thm:nikolov}
	There is a polynomial time algorithm that given a positive semidefinite matrix $B\in \R^{n\times n}$ and $k\geq 0$, outputs a set $S\in \binom{[n]}{k}$ that approximately maximizes $\det(B_{S, S})$. The approximation factor of this algorithm is guaranteed to be $2^{O(k)}$.
\end{theorem}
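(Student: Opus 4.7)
The plan is to reduce $k \times k$ subdeterminant maximization over rectangular $A \in \R^{m\times n}$ to two applications of \cref{thm:nikolov} on Gram matrices. The key identity is that for any index set $J$ of size $k$, the Cauchy--Binet formula yields
\[ \det\bigl((A^{T}A)_{J,J}\bigr) = \det\bigl(A_{[m],J}^{T} A_{[m],J}\bigr) = \sum_{I\in \binom{[m]}{k}} \det(A_{I,J})^{2}, \]
and symmetrically, for any fixed $J$ and any $I\in \binom{[m]}{k}$, the $m\times m$ Gram matrix $C:=A_{[m],J}A_{[m],J}^{T}$ satisfies $\det(C_{I,I}) = \det(A_{I,J})^{2}$.

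The algorithm has two stages. First I would set $B := A^{T}A\in \R^{n\times n}$, which is PSD, and run the algorithm of \cref{thm:nikolov} on $B$ to obtain $J\in \binom{[n]}{k}$ with $\det(B_{J,J}) \geq 2^{-O(k)} \max_{J^{*}} \det(B_{J^{*},J^{*}})$. Letting $(I^{\star},J^{\star})$ be an optimal pair for $\maxdet_k(A)$, isolating the single $I=I^{\star}$ term in the Cauchy--Binet sum gives $\det(B_{J^{\star},J^{\star}}) \geq \maxdet_k(A)^{2}$, so
\[ \det(B_{J,J}) \;\geq\; 2^{-O(k)}\maxdet_k(A)^{2}. \]
Second, with this $J$ fixed, I would form $C = A_{[m],J} A_{[m],J}^{T}\in \R^{m\times m}$, again PSD, and invoke \cref{thm:nikolov} on $C$ to obtain $I\in \binom{[m]}{k}$ with $\det(C_{I,I}) \geq 2^{-O(k)} \max_{I'} \det(C_{I',I'})$. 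By the identity above this says $\det(A_{I,J})^{2} \geq 2^{-O(k)} \max_{I'}\det(A_{I',J})^{2}$.

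To combine the two stages, note that every term in the Cauchy--Binet expansion of $\det(B_{J,J})$ is nonnegative and there are at most $\binom{m}{k}$ such terms, so the maximum term satisfies $\max_{I'}\det(A_{I',J})^{2} \geq \det(B_{J,J})/\binom{m}{k}$. Chaining the inequalities,
\[ \det(A_{I,J})^{2} \;\geq\; \frac{2^{-O(k)}}{\binom{m}{k}}\cdot \det(B_{J,J}) \;\geq\; \frac{2^{-O(k)}}{\binom{m}{k}} \cdot \maxdet_k(A)^{2}. \]
Taking square roots and using $\binom{m}{k}\leq m^{k}$ together with $2^{O(k)} \leq (n+m)^{O(k)}$ gives $\abs{\det(A_{I,J})} \geq (n+m)^{-O(k)} \cdot \maxdet_k(A)$. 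Setting $S_0 = (I,J)$ establishes \cref{lem:crude}. The whole procedure is polynomial time because forming the Gram matrices is polynomial and \cref{thm:nikolov} runs in polynomial time.

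There is essentially no serious obstacle here; the only subtlety is verifying that the gap between $\max_{I'}\det(A_{I',J})^{2}$ and the Cauchy--Binet sum $\det(B_{J,J})$ is at most a $\binom{m}{k}$ factor (and not, say, exponentially large in $n$), which is immediate because the sum has only $\binom{m}{k}$ terms. All other multiplicative losses are a single $2^{O(k)}$ per invocation of \cref{thm:nikolov}, which is absorbed into $(n+m)^{O(k)}$.
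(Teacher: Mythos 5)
Your argument is correct and takes essentially the same approach as the paper's proof of \cref{lem:crude} (the result that the cited \cref{thm:nikolov} is used to establish): a two-stage reduction that forms Gram matrices, invokes Nikolov's PSD subdeterminant maximizer twice (once for rows, once for columns), and uses Cauchy--Binet to pass between the Gram subdeterminant and the sum of squared $k\times k$ subdeterminants. The only difference is cosmetic---you fix the column set first via $A^{T}A\in\R^{n\times n}$ and then the rows via $A_{[m],J}A_{[m],J}^{T}$, while the paper fixes the row set first via $AA^{T}\in\R^{m\times m}$ (which, under its convention $m\le n$, is the smaller Gram matrix) and then the columns---and both routes lose the same $(n+m)^{O(k)}$ factor.
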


Using \cref{thm:nikolov}, we will provide an algorithm that constructs $S_0$, a $(n+m)^{O(k)}$-approximation to $\maxdet_k(A)$ in the general case where $k<\min\set{m, n}$.

\begin{proof}[Proof of \cref{lem:crude}]
	
Consider the following  procedure that outputs $S=(\rows{S}, \cols{S})\in \I$:
\begin{enumerate}
    \item Let $B := A A^\intercal \in \R^{m\times m}.$ Note that $B$ is positive semidefinite. Use \cref{thm:nikolov} to pick $\rows{S} \in \binom{[m]}{k}$ that approximately maximizes $\abs{\det (B_{\rows{S},\rows{S}})}$. 
    \item Let $C: = A_{\rows{S},[n]}\in \R^{k\times n}, D:= C^\intercal C \in \R^{n\times n}. $ Use \cref{thm:nikolov} to pick $\cols{S} \in \binom{[n]}{k}$ that approximately maximizes $\abs{\det (D_{\cols{S},\cols{S}})}.$
\end{enumerate}

We claim that for $S=(\rows{S}, \cols{S})$:
\[(n+m)^{O(k)} \cdot \abs{\det(A_{S})}\geq  \max\set*{\abs{\det(A_{T})}\given T\in \I}.\]

Let $T\in \I$ denote the indices of the submatrix with the maximum $k\times k$ subdeterminant. Note that $B_{\rows{S},\rows{S}} = C C^\intercal$. Thus, by the Cauchy-Binet formula,
\begin{equation} \label{ineq:B}
    \begin{split}
        \det(B_{\rows{S},\rows{S}}) &= \sum_{\cols{W} \in \binom{[n]}{k}} \det(C_{[k],\cols{W}}) \det(C^\intercal_{\cols{W}, [k]}) =  \sum_{\cols{W} \in \binom{[n]}{k}} \det(C_{[k],\cols{W}})^2\\ & = \sum_{\cols{W} \in \binom{[n]}{k}}\det(A_{\rows{S}, \cols{W}})^2
    \leq \sum_{\cols{W} \in \binom{[n]}{k}} 2^{O(k)}\cdot \det(A_{\rows{S}, \cols{S}})^2\\ & = n^{O(k)}\cdot \det(A_{\rows{S}, \cols{S}})^2.
    \end{split}
\end{equation}
Similarly, the Cauchy-Binet formula applied to $B_{\rows{T},\rows{T}} = A_{\rows{T},[n]} (A_{\rows{T},[n]})^\intercal $ gives
\begin{equation}
     \det(B_{\rows{T},\rows{T}}) = \sum_{\cols{W} \in \binom{[n]}{k}}\det(A_{\rows{T}, \cols{W}})^2 \geq \det(A_{\rows{T}, \cols{T}})^2
\end{equation}
Thus,
\begin{equation*}
    n^{O(k)} \det(A_{\rows{S}, \cols{S}})^2 \geq 2^{O(k)} \det(B_{\rows{S},\rows{S}}) \geq \det(B_{\rows{T},\rows{T}}) \geq  \det(A_{\rows{T}, \cols{T}})^2,
\end{equation*}
 where the first inequality follows from (\ref{ineq:B}) and the second from definition of $\rows{S}$.
\end{proof}

\section*{Acknowledgements}
We would like to thank Aleksandar Nikolov for initial discussions about general subdeterminant maximization.

\printbibliography
\end{document}